\newcommand{\algorithmfootnote}[2][\scriptsize]{%
  \let\old@algocf@finish\@algocf@finish
  \def\@algocf@finish{\old@algocf@finish
    \leavevmode\rlap{\begin{minipage}{\linewidth}
    #1#2
    \end{minipage}}%
  }%
}
\theoremstyle{plain}
\newtheorem{lemma}{Lemma}
\newtheorem{theorem}{Theorem}
\newtheorem{corollary}{Corollary}
\theoremstyle{definition}
\newtheorem{definition}{Definition}
\newcommand{\ttrdy}{-1.5} 
\newcommand{\ttrdx}{-15} 
\newcommand{\xlegend}[2]{
\draw[-, very thick] ($(#1, \ttrdy +1)$) -- ($(#1, \ttrdy -1)$) node [pos=1, below] {$\scriptstyle{#2}$};
}
\newcommand{\ylegend}[2]{
\draw[-, very thick] ($(\ttrdx +1,#1)$) -- ($(\ttrdx -1,#1)$) node [pos=1, left] {$\scriptstyle{#2}$};
}
\newcommand{\LRect}[5]{%
    \draw [black,fill=white]  (#1,#2) -- ( #3, #2) -- ( #3,#4) -- (#1,#4);
    \node () at ($(#1 /2 + #3 /2, #2 /2 + #4 /2)$) {$\scriptstyle{#5}$};
 }
\newcommand{\RRect}[5]{%
    \draw [black,fill=white]  ( #3, #2) --(#1,#2) -- (#1,#4)  --  ( #3,#4);
    \node () at ($(#1 /2 + #3 /2, #2 /2 + #4 /2)$) {$\scriptstyle{#5}$};
 }
\newcommand{\ema}[1]{\ensuremath{#1}\xspace}
\newcommand{\pk}[1][k]{\ema{\beta^{(#1)}}}
\newcommand{\maxk}{\ema{K}}
\newcommand{\maxB}{\ema{B}}
\newcommand{\app}[1][k]{\ema{\text{App}^{(#1)}}}
\newcommand{\wk}[1][k]{\ema{w^{(#1)}}}
\newcommand{\io}[1][k]{\ema{\text{vol}_{\text{io}}^{(#1)}}}
\newcommand{\tio}[1][k]{\ema{\text{time}_{\text{io}}^{(#1)}}}
\newcommand{\wkreg}[1][k,i]{\ema{w^{(#1)}}}
\newcommand{\ioreg}[1][k,i]{\ema{\text{vol}_{\text{io}}^{(#1)}}}
\newcommand{\tioreg}[1][k,i]{\ema{\text{time}_{\text{io}}^{(#1)}}}
\newcommand{\tp}[1][k]{\ema{\rho^{(#1)}}}
\newcommand{\yieldy}[1][k]{\ema{\tilde{\rho}^{(#1)}}}
\newcommand{\yieldp}[1][k]{\ema{\tilde{\rho}_{\text{per}}^{(#1)}}}
\newcommand{\maxperiod}{\ema{T_{\max}}}
\newcommand{\minperiod}{\ema{T_{\min}}}
\newcommand{\bandwidth}{\ema{b}}
\newcommand{\duration}{\ema{D}}
\newcommand{\ggamma}{\ema{\gamma}}
\newcommand{\inst}[2][k]{\ema{\mathcal{I}^{(#1)}_{#2}}}
\newcommand{\init}[2]{\ema{\texttt{initW}^{(#1)}_{#2}}}
\newcommand{\initW}[2]{\init{#1}{#2}}
\newcommand{\term}[2]{\ema{\texttt{endW}^{(#1)}_{#2}}}
\newcommand{\termW}[2]{\term{#1}{#2}}
\newcommand{\initIO}[2]{\ema{\texttt{initIO}^{(#1)}_{#2}}}
\newcommand{\nk}[1][k]{\ema{n_{\text{tot}}^{(#1)}}}
\newcommand{\rk}[1][k]{\ema{r_{#1}}}
\newcommand{\dk}[1][k]{\ema{d_{#1}}}
\newcommand{\numk}[1][k]{\ema{n^{(#1)}}}
\newcommand{\band}[2]{\ema{\ggamma^{(#1)}(#2)}}
\newcommand{\bandprime}[2]{\ema{{\ggamma'}^{(#1)}(#2)}}
\newcommand{\sched}{\ema{\mathcal{P}}}
\newcommand{\period}{\ema{T}}
\newcommand{\nper}[1][k]{\ema{n_{\text{per}}^{(#1)}}}
\renewcommand{\b}{\ema{b}}
\newcommand{\intK}[2][k]{\ema{\mathcal{I}^{\{#2\}}_{#1}}}
\newcommand{\event}{\ema{\mathcal{E}}}
\newcommand{\lk}[1][k]{\ema{l_\period(#1)}}
\newcommand{\ninst}{\ema{n_{\text{inst}}}}
\newcommand{\nmax}{\ema{n_{\max}}}
\newcommand{\maxThrough}{\textsc{Sys\-Effi\-ciency}\xspace}
\newcommand{\maxThroughShort}{\textsc{SysEff}\xspace}
\newcommand{\minUserCong}{\textsc{Dilation}\xspace}
\newcommand{\periodic}{{\sc Periodic}\xspace}
\newcommand{\opt}{\ema{\text{opt}}}
\newcommand{\iis}{\textsc{Insert-In-Pattern}\xspace} 
\newcommand{\iisFirst}{\textsc{Insert-First-Instance}\xspace}
\newcommand{\persched}{\textsc{PerSched}\xspace}
\begin{document}
\title{Periodic I/O scheduling for super-computers}
\author{
Guillaume Aupy\thanks{Inria \& Université de Bordeaux, Talence, France} \and Ana Gainaru\thanks{Mellanox Technologies, Oak Ridge, USA} \and Valentin Le Fèvre\thanks{\'Ecole Normale Supérieure de Lyon, France}\\ 
}
\date{}
	\maketitle
	
\begin{abstract}
With the ever-growing need of data in HPC applications, the congestion at the
I/O level becomes critical in super-computers. Architectural enhancement such as
burst-buffers and pre-fetching are added to machines, but are not sufficient to
prevent congestion. Recent online I/O scheduling strategies have been put in
place, but they add an additional congestion point and overheads in the
computation of applications. 

In this work, we show how to take advantage of the periodic nature of HPC
applications in order to develop efficient periodic scheduling strategies
for their I/O transfers. Our strategy computes once during the job scheduling phase a pattern where it
defines the I/O behavior for each application, after which the applications run
independently, transferring their I/O at the specified times. Our strategy limits
the amount of I/O congestion at the I/O node level and can be easily integrated
into current job schedulers. We validate this model through extensive simulations
and experiments by comparing it to state-of-the-art online solutions, showing that
not only our scheduler has the advantage of being de-centralized and thus overcoming the
overhead of online schedulers, but also that it performs better than these
solutions, improving the application dilation up to 13\% and the maximum
system efficiency up to 18\%.
\end{abstract}
	
\section{Introduction}
	\label{sec.intro}

In the race to larger supercomputers, the most commonly used metric is is the
computational power. However supercomputers are not simply computers with
billions of processors. One of the reason why Sunway TaihuLight (the world
fastest supercomputer as of Nov 2016~\cite{top500}), reaches 93~PetaFlops on HPL (a performance benchmark
based on dense linear algebra), but struggles to reach 0.37~PetaFlop on HPCG, a
recent benchmark based on actual HPC applications~\cite{dongarra2013toward} is
data movement. 
Nowadays, a supercomputing application creates or has to deal with TeraBytes of
data. This is true in all fields, from medical research (Brain initiatives), to
astrophysics (HACC~\cite{habib2012universe}, Enzo~\cite{enzoenzo},
HOMME~\cite{nair2007petascale}), including meteorology
(CM1~\cite{bryan2002benchmark}) and fusion plasma
(GTC~\cite{ethier2012petascale}).
In 2013, Argonne upgraded its house supercomputer: moving from Intrepid (peak
performance: 0.56~PFlops; peak I/O throughput: 88~GB/s) to Mira (peak
performance: 10~PFlops; peak I/O throughput: 240~GB/s). While both criteria
seem to have improved considerably, the reality behind is that for a given
application, its I/O throughput scales linearly (or worse) with its performance,
and hence, what should be noticed is a downgrade from 160~GB/PFlop to
24~GB/PFlop! On Intrepid, it was shown that I/O congestion could cause up to a
70\% decrease to the I/O throughput~\cite{gainaru2015scheduling}.

To help with the ever growing amount of data created, architectural improvement
such as burst buffers~\cite{Liu12onthe} have been added to the system. Work is being
done to transform the data before sending it to the disks in the hope of
reducing the I/O sent~\cite{matthieu}. However, even with the current I/O footprint burst buffers are not able to completely hide congestion. Moreover, the data used is always expected
to grow. Recent works~\cite{gainaru2015scheduling} have started working on novel
online, centralized I/O scheduling strategies at the I/O node level. However
one of the risk noted on these strategies is the scalability issue caused by
potentially high overheads (between 1 and 5\% depending on the number of
nodes used in the experiments)~\cite{gainaru2015scheduling}. Moreover, it is expected this overhead to increase at larger scale since it need centralized information about all applications running in the system.

In this paper, we present a decentralized I/O scheduling strategy for
super-computers. We show how to take known HPC application behaviors (namely
their periodicity) into account to derive novel static algorithms. The
periodicity of HPC applications has been well observed and
documented~\cite{carns200924,gainaru2015scheduling,dorier2014omnisc}: HPC
applications alternate between computation and I/O transfer, this pattern
being repeated over-time. Furthermore, fault-tolerance technique (such as periodic
checkpointing~\cite{daly04}) also add to this periodic behavior. 
Using this periodicity property, we compute a static periodic scheduling
strategy, which provides a way for each applications to know when they should
start transferring their I/O (i) hence reducing potential bottlenecks either due
to I/O congestion, and (ii) without having to consult with I/O nodes every time
I/O should be done and hence adding an extra overhead. The main contributions of
this paper are:
\begin{compactitem}
	\item A novel light-weight I/O algorithm that looks at optimizing both
application-oriented (dilation or fairness) and platform-oriented (maximum
system efficiency) objectives;
	\item A set of extensive simulations and experiments that show that this
algorithm performs as well or better than current state of the art heavy-weight
online algorithms.
\end{compactitem}
Note that the algorithm presented here is done as a proof of concept to show the
efficiency of this kind of light-weight techniques. We believe our scheduler can
be implemented naturally into a job scheduler and we provide experimental
results backing this claim. However, this integration is beyond the scope of
this paper.

The rest of the paper is organized as follows: in Section~\ref{sec.model} we
present the application model and optimization problem. In
Section~\ref{sec.algo} we present our novel algorithm technique as well as
a brief proof of concept for a future implementation. In Section~\ref{sec.simu}
we present extensive simulations based on the model to show the performance of
our algorithm compared to state of the art. We then confirm the performance on a
super-computer to validate the model. We give some background and related work
in Section~\ref{sec.related}. We provide concluding remarks and ideas for future
research directions in Section~\ref{sec.conclusion}.

	\section{Model}
	\label{sec.model}

In this section we use the model introduced in our previous
work~\cite{gainaru2015scheduling} that has been verified experimentally to be
consistent with the behavior of Intrepid and Mira, super-computers at Argonne.

We consider scientific applications running at the same time on a parallel
platform. The applications consist of series of computations followed by I/O operations. On a super-computer, the computations are done
independently because each application uses its own nodes. However, the
applications are concurrently sending and receiving data during their I/O phase
on a dedicated I/O network. The consequence of this I/O concurrency is
congestion between an I/O node of the platform and the file storage.

\subsection{Parameters}
\label{sec.model.param}

We assume that we have a parallel platform made up of $N$ identical unit-speed
nodes, composed of the same number of identical processors, each equipped with
an I/O card of bandwidth \bandwidth (expressed in bytes per second). We further
assume a centralized I/O system with a total bandwidth \maxB (also expressed in
bytes per second). This means that the total bandwidth between the computation
nodes and an I/O node is $N \cdot \bandwidth$ while the bandwidth between an I/O
node and the file storage is \maxB, with usually $N\cdot \bandwidth \gg \maxB$.
We have instantiated this model for the Intrepid platform on
Figure~\ref{fig:intrepid}.

\begin{figure}[tbh]
\centering 
\includegraphics[width=0.8\columnwidth]{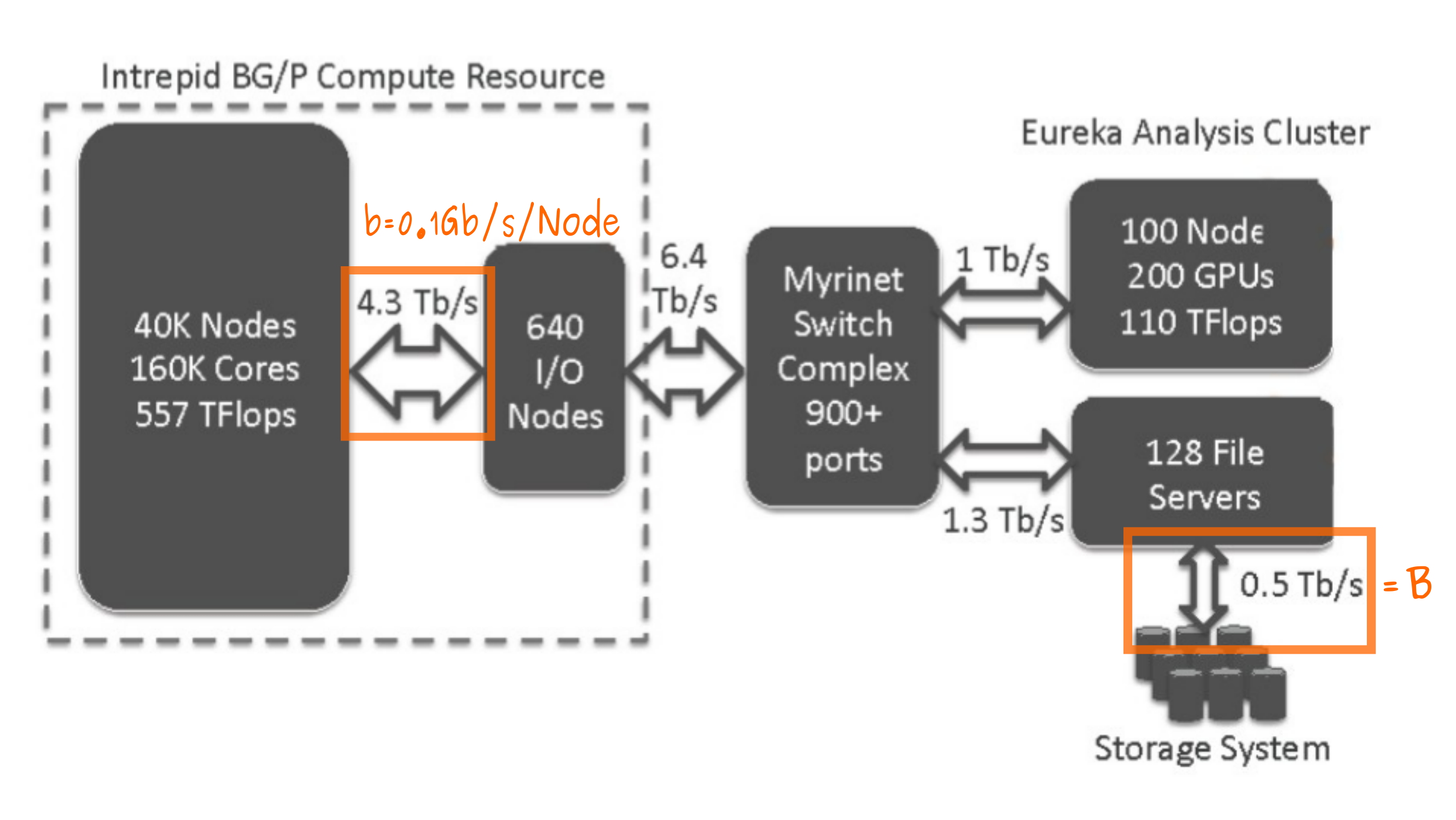}
\caption{Model instantiation for the Intrepid
platform~\cite{gainaru2015scheduling}.}
\label{fig:intrepid}
\end{figure}

We have \maxk applications, all assigned to independent and dedicated
computational resources, but competing for I/O. For each application \app we
define:
\begin{compactitem}
	\item Its size: \app executes with \pk dedicated processors;
	\item Its pattern: \app obeys a pattern that repeats over time. There are
\nk \emph{instances} of \app that are executed one after the other. Each
instance consists of two disjoint phases: computations that takes a time \wk,
followed by I/O transfers for a total volume \io. The next instance cannot start
before I/O operations for the current instance is terminated.
\end{compactitem}
We further denote by \rk the time when \app is released on the platform and \dk
the time when the last instance is completed.
Finally, we denote by \band{k}{t}, the bandwidth used by a node on which
application \app is running, at instant $t$.


\subsection{Execution Model}
\label{sec.model.exec}

As the computation resources are dedicated, we can always assume w.l.o.g that
the next computation chunk starts right away after completion of the previous
I/O transfers, and is executed at full (unit) speed. On the contrary, all
applications compete for I/O, and congestion will likely occur. 
The simplest case is that of a single periodic application \app using the I/O
system in dedicated mode during a time-interval of duration $\duration$. In that
case, let \ggamma be the I/O bandwidth used by each processor of \app during
that time-interval. We derive the condition 
$\pk \ggamma \duration = \io$
to express that the entire I/O data volume is transferred. We must also enforce the constraints that 
(i) $\ggamma \leq \bandwidth$ (output capacity of each processor); and (ii)  
$\pk \ggamma \leq \maxB$ (total capacity of I/O system). Therefore, the minimum time to perform 
the I/O transfers for an instance of \app is 
$\tio = \frac{\io}{\min(\pk \bandwidth, \maxB)}$.
However, in general many applications will use the I/O system simultaneously,
whose bandwidth capacity \maxB will be shared among all these applications (see
Figure~\ref{fig:model}).

\begin{figure}[tbh]
\newcommand{\dashedline}[1]{\draw[dashed] (#1,70) -- (#1,-108);}
\resizebox{\columnwidth}{!}{
\begin{tikzpicture}
\begin{scope}[scale=1]
\begin{scope}[xscale = 1/10, yscale=1/50]

\dashedline{0}

\node () at (-5,14) {$\scriptstyle{\app[1]}$};
\draw[-, thick,olive] (0,0) rectangle (30,28) node [pos=.5] {$\scriptstyle{\wk[1]}$};
\dashedline{30}
\dashedline{35}
\draw[-, thick,olive] (35,0) rectangle (65,28) node [pos=.5] {$\scriptstyle{\wk[1]}$};
\dashedline{65}
\dashedline{77+1/14}
\draw[-, thick,olive] (77+1/14,0) rectangle (107+1/14,28) node [pos=.5] {$\scriptstyle{\wk[1]}$};
\dashedline{107+1/14}
\dashedline{112+1/4}
\draw[-, thick,olive] (120,0) -- (112+1/4,0) -- (112+1/4,28) -- (120,28) ;

\node () at (-5,38) {$\scriptstyle{\app[2]}$};
\draw[-, thick,blue] (0,31) rectangle (20,46) node [pos=.5] {$\scriptstyle{\wk[2]}$};
\dashedline{20}
\dashedline{40}
\draw[-, thick,blue] (40,31) rectangle (60,46) node [pos=.5] {$\scriptstyle{\wk[2]}$};
\dashedline{60}
\dashedline{75}
\draw[-, thick,blue] (75,31) rectangle (95,46) node [pos=.5] {$\scriptstyle{\wk[2]}$};
\dashedline{95}
\dashedline{110 +41/56}
\draw[-, thick,blue] (120,31) -- (110+41/56,31) -- (110+41/56,46) -- (120,46) ;

\node () at (-5,58) {$\scriptstyle{\app[3]}$};
\draw[-, thick,red] (0,49) rectangle (25,67) node [pos=.5] {$\scriptstyle{\wk[3]}$};
\dashedline{25}
\dashedline{38}
\draw[-, thick,red] (38,49) rectangle (63,67) node [pos=.5] {$\scriptstyle{\wk[3]}$};
\dashedline{63}
\dashedline{74+1/3}
\draw[-, thick,red] (74+1/3,49) rectangle (99+1/3,67) node [pos=.5] {$\scriptstyle{\wk[3]}$};
\dashedline{99+1/3}
\dashedline{113.35}
\draw[-, thick,red] (120,49) -- (113.35,49) -- (113.35,67) -- (120,67) ;

\begin{scope}[yshift=-100cm]
\draw[->, very thick] (\ttrdy,-2.5) -- (\ttrdy,55) node [pos=1, left] {$\scriptstyle{\text{Bandwidth}}$};
\draw[->, very thick] (-2.5,\ttrdy) -- (125,\ttrdy) node [pos=1, below] {$\scriptstyle{\text{Time}}$};

\xlegend{0}{0};
\draw[-, very thick] ($(\ttrdy-1, 0)$) -- ($(\ttrdy +1, 0)$) node [pos=0, left] {$\scriptstyle{0}$};
\draw[-, very thick] ($(\ttrdy -1, 40)$) -- ($(\ttrdy +1, 40)$) node [pos=0, left] {$\scriptstyle{\maxB}$};
\draw[ultra thin] (0,40) -- (120,40);

\draw[-, ultra thin,fill=olive] (30,0) rectangle (35,28) ;
\draw[-, ultra thin,fill=olive] (65,33) -- (65,40) -- (75,40) -- (75,28) -- ($(75+58/28, 28)$) -- ($(75+58/28, 0)$) -- (75, 0) -- (75,15) -- (223/3, 15) -- (223/3, 33);
\draw[-, ultra thin,fill=olive] (107+1/4,12) -- (107+1/4,40) -- (110+41/56,40) --  (110+41/56,28) --  (112+1/4, 28) -- (112+1/4,0)-- (110+41/56,0) -- (110+41/56,12) ;

\draw[-, ultra thin,fill=blue] (20,0) -- (20,15) -- (30,15) -- (30,0) -- (35,0) -- (35,15) -- (40,15) -- (40,0) ;
\draw[-, ultra thin,fill=blue] (60,0) rectangle (75,15);
\draw[-, ultra thin,fill=blue] (95,0) -- (95,15) -- (107+1/14,15) -- (107+1/14,12) -- (110+41/56,12) -- (110 +41/56,0) ;

\draw[-, ultra thin,fill=red] (25,15) -- (25,33) -- (30,33) -- (30,40) -- (35,40) -- (35,33) -- (38,33) -- (38,15) -- (35,15) -- (35,28) --  (30,28) -- (30,15);
\draw[-, ultra thin,fill=red] (63,15) rectangle ($(223/3,33)$);
\draw[-, ultra thin,fill=red] (99+1/3,15) -- (99+1/3,33) -- (107+1/4,33) -- (107+1/4,40) -- (110+41/56,40) -- (112+1/4,40) -- (112+1/4,18) -- (113.35,18) -- (113.35,0) -- (112+1/4,0) -- (112+1/4,28) -- (110+41/56,28) -- (110+41/56,40) -- (107+1/4,40) -- (107+1/4,15) ;
\end{scope}

\end{scope}
\end{scope}
\end{tikzpicture}
}
\caption{Scheduling the I/O of three periodic applications (top: computation, bottom: I/O).}
\label{fig:model}
\end{figure}

This model is very flexible, and the only assumption is that at any instant, all
processors assigned to a given application are assigned the same bandwidth. This
assumption is transparent for the I/O system and simplifies the problem
statement without being restrictive.
Again, in the end, the total volume of I/O transfers for an instance of \app
must be \io, and at any instant, the rules of the game
are simple: never exceed the individual bandwidth \bandwidth of each processor ($\band{k}{t} \leq \bandwidth$ for any $k$ and $t$), 
and never exceed the total bandwidth \maxB of the I/O system ($\sum\limits_{k=1}^\maxk \pk \band{k}{t} \leq \maxB$ for any $t$).

	\subsection{Objectives}
We now focus on the optimization objectives at hand here. We use the objectives
introduced in~\cite{gainaru2015scheduling}. 

First, the \emph{application efficiency} achieved for each application \app at
time $t$ is defined as
\[\yieldy(t)  = \frac{\sum_{i \leq \numk\!(t)} \wkreg}{t - \rk},\]
where $\numk\!(t) \leq \nk$ is the number of instances of application \app that have been executed at time $t$,
since the release of \app at time \rk. Because we execute \wkreg units of computation 
followed by \ioreg units of I/O operations on instance \inst{i} of \app, we have  
$t - \rk \geq \sum_{i \leq  \numk(t)} \left (\wkreg + \tioreg \right )$.
Due to I/O congestion, \yieldy never exceeds the optimal efficiency that can be
achieved for \app, namely
\[\tp = \frac{\wk}{\wk + \tio}\]

The two key optimization objectives, together with a rationale for each of them,
are:
\begin{compactitem}
	\item \maxThrough: where we maximize the peak performance of the platform,
namely maximizing the amount of operations per time unit:
\begin{equation}
\label{eq:syseff}
\text{maximize } \frac{1}{N} \sum_{k=1}^\maxk \pk \yieldy(\dk) .
\end{equation}
	\item \minUserCong: where we minimize the largest slowdown imposed to each
application (hence optimizing fairness across applications):
\begin{equation}
\label{eq:dilation}
\text{minimize } \max_{k=1 .. \maxk} \frac{\tp}{\yieldy(\dk)} .
\end{equation}
\end{compactitem}


Note that it is known that both problems are NP-complete, even in an (easier)
offline setting~\cite{gainaru2015scheduling}.

	\section{Periodic scheduling strategy}
	\label{sec.algo}

In general, for an application \app, \nk the number of instances of \app is very
large and not polynomial in the size of the problem. For this reason, online
schedule have been preferred until now.		
The key novelty of this paper is to introduce {\em periodic schedules} for the
\maxk applications. Intuitively, we are looking for a computation and I/O {\em
pattern} of duration \period that will be repeated over time (except for {\em
initialization} and {\em clean up} phases), as shown on 
Figure~\ref{fig.period.general}. 
In this section, we start by introducing the notion of periodic schedules and a
way to compute the application efficiency differently. We then provide the
algorithms that are at the core of this work.
	
Because there is no competition on computation (no shared resources),
we can consider that a chunk of computation directly follows the end of the
transfer of I/O, hence we need only to represent I/O transfers in this pattern.
The bandwidth used by each application during the I/O operations is represented
over time, as shown in Figure~\ref{fig.period.detail}.
We can see that an operation can overlap with the one of the previous
pattern or the next pattern, but overall, the pattern will just repeat.
	
\begin{figure}[htb]
\centering
\subfloat[Periodic schedule (phases)]{\resizebox{\linewidth}{!}{
\begin{tikzpicture}
\begin{scope}[scale=1]
\begin{scope}[xscale = 1/10, yscale=1/50]

\draw[->, very thick] (\ttrdy,-2.5) -- (\ttrdy,55) node [pos=1, left] {$\scriptstyle{\text{Bw}}$};
\draw[->, very thick] (-2.5,\ttrdy) -- (125,\ttrdy) node [pos=1, below] {$\scriptstyle{\text{Time}}$};

\RRect{2}{0}{5}{35}{};
\draw [thick, decoration={brace, mirror,raise=0.5cm}, decorate] (0,0) -- (5,0); 
    \node () at ($(2.5,25*\ttrdy)$) {$\scriptstyle{\text{Init}}$};
\draw[-, ultra thin,pattern=north east lines, pattern color=red] (5,0) rectangle (20,50) ;
\draw[-, ultra thin,pattern=north east lines, pattern color=red] (20,0) rectangle (35,50) ;
\draw[-, ultra thin,pattern=north east lines, pattern color=red] (35,0) rectangle (50,50) ;
    \node () at (55,25) {$\cdots$};
\draw[-, ultra thin,pattern=north east lines, pattern color=red] (60,0) rectangle (75,50) ;
\draw[-, ultra thin,pattern=north east lines, pattern color=red] (75,0) rectangle (90,50) ;
\draw[-, ultra thin,pattern=north east lines, pattern color=red] (90,0) rectangle (105,50) ;
\draw [thick, decoration={brace, mirror,raise=0.5cm}, decorate] (20,0) -- (35,0); 
    \node () at ($(27.5,25*\ttrdy)$) {$\scriptstyle{\text{Pattern}}$};

\LRect{105}{0}{107}{24}{};
\draw[fill=white, draw = black] (106,24) -- (106,35) -- (107,35) -- (107,29) -- (108,29) -- (108,18) -- (107,18) -- (107, 24) ; 
\draw[fill=white, draw = black] (107,0) rectangle (109.5,18) node [pos=.5] {} ; 
\draw [thick, decoration={brace, mirror,raise=0.5cm}, decorate] (105,0) -- (110,0); 
    \node () at ($(107.5,25*\ttrdy)$) {$\scriptstyle{\text{Clean up}}$};

\xlegend{5}{c};
\xlegend{20}{\period+c};
\xlegend{35}{2 \period+c};
\xlegend{50}{3\period+c};
\xlegend{75}{(n-2) \period+c};
\xlegend{90}{(n-1)\period+c};
\xlegend{105}{n\period+c};
\end{scope}
\end{scope}
\end{tikzpicture}
}
\label{fig.period.general}}

\subfloat[Detail of I/O in a period/pattern]{
\resizebox{\linewidth}{!}{
\begin{tikzpicture}
\begin{scope}[scale=1]
\begin{scope}[xscale = 1/11.5, yscale=1/20]

\draw[->, very thick] (\ttrdx,-2.5) -- (\ttrdx,37) node [pos=1, left] {$\scriptstyle{\text{Bw}}$};
\draw[->, very thick] ($(\ttrdx-1,\ttrdy)$) -- (125,\ttrdy) node [pos=1, below] {$\scriptstyle{\text{Time}}$};
\draw[-, ultra thin,pattern=north east lines, pattern color=red] (0,0) rectangle (120,32) ;
\xlegend{0}{0};
\ylegend{0}{0};
\xlegend{120}{\period};
\ylegend{32}{\maxB};

\draw[fill=white, draw = black] (5,0) rectangle (25,12) node [pos=.5] {$\scriptstyle{\io[1]}$} ;
\begin{scope}[xshift=35cm]
\draw[fill=white, draw = black] (5,0) rectangle (25,12) node [pos=.5] {$\scriptstyle{\io[1]}$} ; 
\end{scope}
\begin{scope}[xshift=70cm]
\draw[fill=white, draw = black] (5,0) rectangle (25,12) node [pos=.5] {$\scriptstyle{\io[1]}$} ; 
\end{scope}

\draw [black,fill=white]  (15,12) -- ( 15, 27) -- (25,27) -- (25, 15) -- ( 34.2,15) -- (34.2,0) -- (25,0) -- (25,12);
\node () at (20,18) {$\scriptstyle{\io[2]}$};
\draw[fill=white, draw = black] (62,0) rectangle (71,32) node [pos=.5] {$\scriptstyle{\io[2]}$} ; 
\draw[fill=white, draw = black] (98.5,0) rectangle (107.5,32) node [pos=.5] {$\scriptstyle{\io[2]}$} ; 

\draw [black,fill=white]  (-10,0) -- ( -10, 30) -- (5,30) -- (5, 32) -- ( 15,32) -- (15,12) -- (5,12)-- (5,0) -- (-10, 0);
\node () at (5,20) {$\scriptstyle{\io[3]}$};
\begin{scope}[xshift=120cm]
\draw [black,fill=white]  (-10,0) -- ( -10, 30) -- (5,30) -- (5, 32) -- ( 15,32) -- (15,12) -- (5,12)-- (5,0) -- (-10, 0);
\node () at (5,20) {$\scriptstyle{\io[3]}$};
\end{scope}

\draw [black,fill=white]  (45,12) -- ( 45, 32) -- (62,32) -- (62, 0) -- ( 60,0) -- (60,12);
\node () at (58,20) {$\scriptstyle{\io[4]}$};
\draw [black,fill=white]  (71,0) -- ( 71, 32) -- (76,32) -- (76, 0) -- (71,0);

\xlegend{76}{\initW{4}{1}};
\xlegend{31}{\termW{4}{1}};
\xlegend{45}{\initIO{4}{1}};

\end{scope}
\end{scope}
\end{tikzpicture}}
\label{fig.period.detail}}
\caption{A schedule (above), and the detail of one of its regular pattern
(below), where
$(\wk[1] = 3.5; \io[1]=240; \nper[1]=3) $, 
$(\wk[2] = 27.5; \io[2]=288;  \nper[2]=3) $, 
$(\wk[3] = 90; \io[3]=350; \nper[3]=1) $, 
$(\wk[4] = 75; \io[4]=524; \nper[4]=1) $.}
\label{fig.period}
\end{figure}

To describe a pattern, we use the following notations:
\begin{compactitem}
	\item \nper: the number of instances of \app during a pattern.
	\item \inst{i}: the $i$-th instance of \app during a pattern.
	\item $\init{k}{i}$: the time of the beginning of \inst{i}. So, \inst{i} has a computation interval going from $\init{k}{i}$ to $\term{k}{i} = \init{k}{i} +\wk\text{ mod } T$.
	\item \initIO{k}{i}: the time when the I/O transfer from the $i$-th instance of \app starts (between \term{k}{i} and \initIO{k}{i}, \app is idle). Therefore, we have
	\[\int_{\initIO{k}{i}}^{\init{k}{(i+1)\%\nper}} \pk \band{k}{t}dt = \io.\]
\end{compactitem}
Globally, if we consider the two dates per instance \init{k}{i} and
\initIO{k}{i}, that define the change between computation and I/O phases, we
have a total of $S \leq \sum_{k=1}^\maxk 2\nper$ distinct dates, that are called
the \emph{events} of the pattern.

We define the periodic efficiency of a pattern of size \period:
\begin{equation}
\yieldp = \frac{\nper \wk}{\period}.
\end{equation}
For periodic schedules, we use it to approximate the actual efficiency
achieved for each application.
The rationale behind this can be seen on Figure~\ref{fig.period}.
If \app is released at time \rk, and the first pattern starts at time $\rk+c$, that is after an initialization
phase, then the main pattern is repeated $n$ times (until time $n\cdot\period + \rk + c$), and 
finally \app ends its execution after a clean-up phase at time $\dk=\rk + c + n\cdot\period + c'$.
If we assume that $n\cdot\period \gg c+c'$, then $\dk - \rk \approx n\cdot\period$.
Then the value of the $\yieldy(d_k)$ for \app is: 
\begin{align*}
\yieldy(d_k) &=\frac{\left(n \cdot \nper + \delta \right )\wk}{d_k - r_k} = \frac{\left(n \cdot \nper + \delta \right )\wk}{c + n\cdot\period + c'} \\
& \approx \frac{\nper \wk}{\period} = \yieldp
\end{align*}
where $\delta$ can be 1 or 0 depending whether \app was executed or not during the clean-up or 
init phase.

	\subsection{\persched: a periodic scheduling algorithm}
	\label{sec.algo.persched}
{\em For details in the implementation, we refer the interested reader to the
source code available at \url{https://github.com/vlefevre/IO-scheduling-simu}.}

The difficulties of finding an efficient periodic schedule are three-fold:
\begin{compactitem}
	\item The first one is that the right pattern size has to be determined;
	\item The second one is that for a given pattern size, the number of instances of
each application that should be included in this pattern need to be determined;
	\item Finally, the time constraint between two consecutive I/O transfers of a
given application, due to the computation in-between makes naive scheduling
strategies harder to implement.
\end{compactitem}

\paragraph{Finding the right pattern size}
A solution is to find schedules with different pattern sizes between a minimum
pattern size \minperiod and a maximum pattern size \maxperiod.

Because we want a pattern to have at least one instance of each application, we
can trivially set up $\minperiod = \max_k (\wk +\tio)$.
Intuitively, the larger \maxperiod is, the more possibilities we can have to
find a good solution. However this also increases the complexity of the
algorithm. We want to limit the number of instances of all applications in a
schedule. For this reason we chose to have $\maxperiod=O(\max_k (\wk +\tio))$.
We discuss this hypothesis in Section~\ref{sec.simu}, where we give better
experimental intuition on finding the right value for \maxperiod. Experimentally
we observe (Section~\ref{sec.simu}, Figure~\ref{fig.tmax}) that
$\maxperiod=10\minperiod$ seems to be sufficient.

We then decided on an iterative search where the pattern size increases
exponentially at each iteration from \minperiod to \maxperiod.
In particular, we use a precision $\varepsilon$ as input and we iteratively
increase the pattern size from \minperiod to \maxperiod by a factor $(1+\varepsilon)$.
This allows us to have a polynomial number of iterations.
The rationale behind the exponential increase is that when the pattern size gets
large, we expect performance to converge to an optimal value, hence needing less
the precision of a precise pattern size. Furthermore while we could try only large
pattern size, it seems important to find a good small pattern size as it would simplify
the scheduling step. Hence a more precise search for smaller pattern sizes. Finally,
we expect the best performance to cycle with the pattern size. We verify these
statements experimentally in Section~\ref{sec.simu} (Figure~\ref{fig.evolutionT}).

\paragraph{Determining the number of instances of each application}
By choosing $\maxperiod=O(\max_k (\wk +\tio))$, we guarantee the maximum number
of instances of each application that fit into a pattern is 
$O\left (\frac{\max_k (\wk +\tio)}{\min_k (\wk +\tio)}\right )$.

\paragraph{Instance scheduling}
Finally, our last item is, given a pattern of size \period, how to schedule
instances of applications into a periodic schedule.

To do this, we decided on a strategy where we insert instances of applications
in a pattern, without modifying dates and bandwidth of already scheduled
instances. Formally, we call an application schedulable:
\begin{definition}[Schedulable]	
\label{def:schedulable}
  Given an existing pattern 
  \[\sched = \cup_{k=1}^{\maxk} \left ( \nper, \cup_{i=1}^{\nper} \{\init{k}{i},\initIO{k}{i},\band{k}{}\}\right ),\] 
  we say that an application \app is schedulable if there exists $1 \leq i \leq
  \nper$, such that: 
\begin{equation}
	\label{eq:schedulable}
\int_{\init{k}{i}+\wk}^{\initIO{k}{i}-\wk} \min\left (\pk \bandwidth,\maxB - \sum_l \pk[l] \band{l}{t} \right )dt \geq \io
\end{equation}
\end{definition}

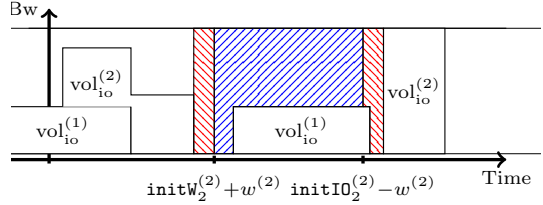
\begin{figure}[tbh]
\centering  
\resizebox{0.6\textwidth}{!}{

\begin{tikzpicture}
\begin{scope}[scale=1]
\begin{scope}[xscale = 1/11.5, yscale=1/20]

\clip (7.5,-15) rectangle (85,40);
\draw[->, very thick] (13,-2.5) -- (13,37) node [pos=1, left] {$\scriptstyle{\text{Bw}}$};
\draw[->, very thick] ($(\ttrdx-1,\ttrdy)$) -- (80,\ttrdy) node [pos=1, below] {$\scriptstyle{\text{Time}}$};
\draw[-, ultra thin] (0,0) rectangle (90,32) ;
\draw (10,32) -- (80,32);
\xlegend{0}{0};
\ylegend{0}{0};
\ylegend{32}{\maxB};

\draw[fill=white, draw = black] (5,0) rectangle (25,12) node [pos=.5] {$\scriptstyle{\io[1]}$} ;
\begin{scope}[xshift=35cm]
\draw[fill=white, draw = black] (5,0) rectangle (25,12) node [pos=.5] {$\scriptstyle{\io[1]}$} ; 
\end{scope}

\draw [black,fill=white]  (15,12) -- ( 15, 27) -- (25,27) -- (25, 15) -- ( 34.2,15) -- (34.2,0) -- (25,0) -- (25,12);
\node () at (20,18) {$\scriptstyle{\io[2]}$};
\draw[fill=white, draw = black] (62,0) rectangle (71,32) node [pos=.5] {$\scriptstyle{\io[2]}$} ; 
\draw[fill=white, draw = black] (98.5,0) rectangle (107.5,32) node [pos=.5] {$\scriptstyle{\io[2]}$} ; 

\begin{scope}[xshift=120cm]
\draw [black,fill=white]  (-10,0) -- ( -10, 30) -- (5,30) -- (5, 32) -- ( 15,32) -- (15,12) -- (5,12)-- (5,0) -- (-10, 0);
\node () at (5,20) {$\scriptstyle{\io[3]}$};
\end{scope}

\draw[pattern color=blue, pattern=north east lines,draw=black] (37.2,0) -- (37.2,32) -- (59,32) -- (59,12) -- (40,12) -- (40,0) -- (37.2,0);
\draw[pattern color=red, pattern=north west lines,draw=black] (34.2,0) rectangle (37.2,32);
\draw[pattern color=red, pattern=north west lines,draw=black] (59,12) -- (59,32) -- (62,32) -- (62,0) -- (60,0) -- (60,12) -- (59,12);

\xlegend{37.2}{\initW{2}{2}+\wk[2]};
\xlegend{59}{\initIO{2}{2}-\wk[2]};

\end{scope}
\end{scope}
\end{tikzpicture}
}
\caption{Description of what schedulable means: if we want to insert an instance of \app[2], we need to check that the blue area is greater than \io[2], while the red area is reserved
for the computation of \wk[2].}
\label{fig.schedulable-def}
\end{figure}
To understand the integral in Equation~\eqref{eq:schedulable}: we are checking
that during the end of the computation of the $i^{\text{th}}$ instance
($\init{k}{i}+\wk$), and the beginning of the computation of the
$i+1^{\text{th}}$ instance (\initIO{k}{i}-\wk), there is enough bandwidth to
perform at least a volume of I/O of \io. We represent it graphically on
Figure~\ref{fig.schedulable-def}.


\begin{algorithm}[tbh]
\small
\caption{\iis}
\label{algo:iis}
procedure {\iis}(\sched, \app)\\
\Begin{
\eIf{\app has 0 instance}{
	\Return{\iisFirst(\sched,\app)}\;
}
{
	$T_{\min}:=+\infty$ \;
	Let $\intK{i}$ be the last inserted instance of \app\;
	Let $\event_0, \event_1, \cdots, \event_{j_i}$ the times of the events between the end of $\intK{i}+\wk$ and the beginning of $\intK{(i+1)\text{ mod }\lk}$\;
	For $l = 0 \cdots j_i-1$, let $B_l$ be the minimum between \pk\b and the available bandwidth during $[\event_l, \event_{l+1}]$\;
	DataLeft = \io\;
	$l = 0$\;
	sol = []\;
	\While{DataLeft $> 0$ and $l < j_i$}{
		\label{iis.line.while}
		\If{$B_l > 0$}{
			TimeAdded $= \min(\event_{l+1}-\event_l, \text{DataLeft}/B_{l})$\;
			DataLeft -= TimeAdded$\cdot B_{l}$\;
			sol = $[ (\event_l, \event_l+TimeAdded, B_l) ]$ + sol\;
		}
		$l$++\;
	}
	\eIf{DataLeft$>0$}{\Return{\sched}}{\Return{\sched.\texttt{addInstance}(\app,sol)}}
}
}
\end{algorithm}

With Definition~\ref{def:schedulable}, we can now explain the core idea of the
instance scheduling part of our algorithm.
Starting from an existing pattern, while there exist applications that are
schedulable:
\begin{compactitem}
	\item Amongst the applications that are schedulable, we choose the
application that has the worse \minUserCong. The rationale is that even though we want to
increase \maxThrough, we do it in a way that ensures that all applications are
treated fairly;
	\item We insert the instance into an existing scheduling using a procedure
\iis such that (i) the first instance of each application is inserted using
procedure \iisFirst which minimizes the time of the I/O transfer of this new
instance, (ii) the other instances are inserted just after the last inserted
one. 
\end{compactitem}

Note that \iisFirst is implemented using a water-filling
algorithm~\cite{gallager1968information} and \iis is implemented as described in
Algorithm~\ref{algo:iis}.
We use a different function for the first instance of each application because
we do not have any previous instance to use the \iis function. Thus, the basic
idea would be to put them at the beginning of the pattern, but it will be more
likely to create congestion if all applications are ``synchronized'' (for
example if all the applications are the same, they will all start their I/O
phase at the same time). By using \iisFirst, every first instance will be at a
place where the congestion for it is minimized. This creates a starting point
for the subsequent instances.

The function \texttt{addInstance} updates the pattern with the new instance,
given a list of the intervals $(\event_l, \event_{l'}, \bandwidth_l)$ during
which \app transfers I/O between $\event_l$ and $\event_{l'}$ using a bandwidth
$\bandwidth_l$.

\paragraph{Correcting the period size}
In Algorithm~\ref{algo:persched}, the pattern sizes under trial are determined by
\minperiod and $\varepsilon$. There is no reason why this would be the right
pattern size, and one might be interested in reducing it to fit precisely the
instances that are included in the solutions that we found.

In order to do so, once a periodic pattern has been computed, we try to improve
the best pattern size we found in the first loop of the algorithm, by trying new
pattern sizes, close to the previous best one, let us say $T_{\opt}$. To do
this, we add a second loop which now tries $1/\varepsilon$ uniformly distributed
pattern sizes from $T_{\opt}$ to $T_{\opt}/(1+\varepsilon)$.

With all of this in mind, we can now write \persched
(Algorithm~\ref{algo:persched}), our algorithm to construct a periodic pattern.
For all pattern sizes tried between \minperiod and \maxperiod, we return the pattern
with maximal \maxThrough.

\begin{algorithm}[tbh]
\small
\caption{\small Periodic Scheduling heuristic: \persched}
\label{algo:persched}
\algorithmfootnote{We estimate \maxThrough of a periodic pattern, by replacing $\yieldy(\dk)$ by \yieldp in Equation~\eqref{eq:syseff}}
procedure {\persched}$(K',\varepsilon,\{\app\}_{1\leq k\leq \maxk})$\\
\Begin{
	$\minperiod \gets \max_k (\wk +\tio)$\;
	$\maxperiod \gets K'\cdot\minperiod$\;
	$T = \minperiod$\;
	$\texttt{SE}\gets 0$\;
	$T_{\opt} \gets 0$\;
	$\sched_{\opt} \gets \{\}$\;
	\While{$T \leq \maxperiod$}
	{\label{algo:while_main}
		\sched = \{\}\;
		\While{exists a schedulable application}
		{\label{algo:persched_while}
			$\mathcal{A} = \{\app|\app \text{ is schedulable}\}$\;
			Let \app be the element of $\mathcal{A}$ minimal with respect to
			the lexicographic order $\left(\frac{\tp}{\yieldp},\frac{\wk}{\tio}\right)$\label{algo:persched_min}\;
			$\sched\!\gets$\iis\!\!$(\sched,\app)$\;
		}
		\If{$\texttt{SE}<\maxThrough(\sched)$}{\label{algo:se}
			$\texttt{SE}\gets \maxThrough(\sched)$\;
			$T_{\opt} \gets T$\;
			$\sched_{\opt} \gets \sched$
		}
		$T \gets T \cdot (1+\varepsilon)$\;
	}
	$T \gets T_{\opt}$\;
	\While{\texttt{true}}
	{\label{algo:while_main2}
		\sched = \{\}\;
		\While{exists a schedulable application}
		{\label{algo:persched_while2}
			$\mathcal{A} = \{\app|\app \text{ is schedulable}\}$\;
			Let \app be the element of $\mathcal{A}$ minimal with respect to
			the lexicographic order $\left(\frac{\tp}{\yieldp},\frac{\wk}{\tio}\right)$\label{algo:persched_min2}\;
			$\sched\!\gets$\iis\!\!$(\sched,\app)$\;
		}
		\eIf{$\maxThrough(\sched) = \frac{T_{\opt}}{T}\cdot \texttt{SE} $\label{algo:persched.se_test}}{
			$\sched_{\opt} \gets \sched$\;
			$T \gets T - (T_{\opt} - \frac{T_{\opt}}{1+\varepsilon})/\lfloor1/\varepsilon\rfloor$
		}{
		      \Return $\sched_{\opt}$
		}
	}
}
\end{algorithm}

\clearpage
\subsection{Complexity analysis}
Finally, in this section we show that our algorithm runs in reasonnable
execution time. We detail theoretical results that allowed us to reduce the
complexity.
We want to show the following result:
\begin{theorem}
	\label{thm:complexity}
Let $\nmax = \left (\frac{\max_k (\wk +\tio)}{\min_k (\wk +\tio)}\right)$,\\
$\persched(K',\varepsilon,\{\app\}_{1\leq k\leq \maxk})$ runs in
\[	
O\left(\left(\left\lceil\frac{1}{\varepsilon}\right\rceil+\left\lceil\frac{\log K'}{\log (1+\varepsilon)}\right\rceil\right)\cdot\maxk^2\left( \nmax + \log K' \right )\right).
\]
\end{theorem}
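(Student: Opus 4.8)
The plan is to bound the running time as a product of two factors matching the claimed expression: the number of outer-loop iterations, and the cost of constructing one complete periodic pattern (one full execution of the inner greedy loop). First I would count the outer iterations. The first loop (line~\ref{algo:while_main}) starts at $T=\minperiod$ and repeatedly sets $T\gets T(1+\varepsilon)$ until $T>\maxperiod=K'\minperiod$; since $T$ grows geometrically with ratio $(1+\varepsilon)$ across a multiplicative window of width $K'$, this loop runs $\lceil \log K'/\log(1+\varepsilon)\rceil$ times. The refinement loop (line~\ref{algo:while_main2}) probes $\lfloor 1/\varepsilon\rfloor$ uniformly spaced periods in $[T_{\opt}/(1+\varepsilon),T_{\opt}]$, contributing $O(\lceil 1/\varepsilon\rceil)$ iterations. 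Their sum is exactly the first parenthesized factor, and I would check that all per-iteration bookkeeping outside the inner loop (updating \texttt{SE}, $T_{\opt}$, $\sched_{\opt}$, and the test on line~\ref{algo:persched.se_test}) is dominated by the pattern-construction cost below.

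Next I would supply the combinatorial input controlling one pattern. Using the instance bound already argued in the paragraph on the number of instances, for any $T\le\maxperiod$ each application fits $O(\nmax)$ instances, so a pattern holds $O(\maxk\,\nmax)$ instances and hence $O(\maxk\,\nmax)$ events (two per instance). Because a schedulable application always admits an actual insertion (Definition~\ref{def:schedulable} certifies enough free bandwidth in some slot, which \iis then realizes), every pass of the inner loop (line~\ref{algo:persched_while}) adds exactly one instance, so the loop terminates after $O(\maxk\,\nmax)$ passes.

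I would then charge the per-pattern cost in two parts. The greedy part dominates the $\maxk^2\nmax$ term: the inner loop makes $O(\maxk\,\nmax)$ passes, and each pass must determine the schedulable set and the lexicographic key $(\tp/\yieldp,\ \wk/\tio)$ across the $\maxk$ applications; maintaining the available-bandwidth profile in a sorted event structure and updating it incrementally, the total schedulability-testing and selection work over the whole loop is $O(\maxk^2\nmax)$ (the key updates are $O(1)$ per pass since only the just-inserted application changes its $\nper$, hence its $\yieldp$). The remaining $\maxk^2\log K'$ term I would attribute to the logarithmic-cost operations on the event structure — the $\maxk$ first-instance placements by \iisFirst (water-filling) together with the ordered-set queries and updates performed by \iis — bounding each by $O(\maxk\log K')$ and summing over the $\maxk$ applications. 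Adding the two parts gives $O(\maxk^2(\nmax+\log K'))$ per pattern; multiplying by the outer-iteration count proves the theorem.

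The main obstacle is the amortized accounting of the inner loop: I must show that schedulability can be re-tested and the greedy insertion carried out without rescanning all $O(\maxk\,\nmax)$ events on every pass, i.e. that the event structure supports the queries behind Definition~\ref{def:schedulable} and the \iis updates in (poly)logarithmic time as the profile evolves, so that the per-loop total is $O(\maxk^2\nmax)$ rather than a larger polynomial. Pinning down precisely how the $\log K'$ factor (as opposed to $\log(\maxk\,\nmax)$) enters the water-filling and the event-sorting, and verifying the incremental-maintenance invariants, is where the real work lies; the iteration counts and the instance/event bounds are then routine.
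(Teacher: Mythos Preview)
Your outer-loop count is correct and matches the paper. The gap is in the per-pattern analysis, which you yourself flag as ``where the real work lies'' but do not actually carry out. Two structural facts are needed that your sketch does not supply. First, \emph{monotonicity of non-schedulability}: once an application fails the test of Definition~\ref{def:schedulable}, it never becomes schedulable again (the total used bandwidth is non-decreasing as instances are added). This is what lets the paper replace the expensive recomputation of $\mathcal{A}$ by a heap $\tilde{\mathcal{A}}$ from which applications are permanently evicted on failure; without it, your claim that ``total schedulability-testing and selection work over the whole loop is $O(\maxk^2\nmax)$'' does not follow, since each pass could in principle rescan all $O(\maxk\,\nmax)$ events for each of the $\maxk$ applications. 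Second, the amortization of \iis relies on \emph{compactness}: because each new instance is placed immediately after the previously inserted one, the event intervals scanned by successive \iis calls for the same application are disjoint, so the total work across all calls for that application is linear in the total number of events, i.e.\ $O(\maxk\,\nmax)$. Summing over $\maxk$ applications gives the $O(\maxk^2\nmax)$ term. Your appeal to a ``sorted event structure'' with ``(poly)logarithmic'' queries is a different mechanism and is not what makes the bound work here.

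Your attribution of the additive $\log K'$ term is also off. In the paper the logarithmic contribution comes from the $\maxk$ calls to \iisFirst: the $k$-th such call sees $O(k)$ events and sorts them in $O(k\log k)$, summing to $O(\maxk^2\log\maxk)$. The parameter $K'$ controls the period range, not the event count at first-instance insertion, so there is no natural reason for $\log K'$ to appear from water-filling or event-set maintenance. (Indeed the paper's own proof concludes with $\maxk^2(\log\maxk+\nmax)$ per pattern; the $\log K'$ in the theorem statement is at best a loose substitution.)
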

Some of the complexity results are straightforward. The key results to show are:
\begin{itemize}
	\item The complexity of the tests ``\texttt{while} {\em exists a schedulable
application}'' on lines~\ref{algo:persched_while} and~\ref{algo:persched_while2}
	\item The complexity of computing $\mathcal{A}$ and finding its minimum
	element on line~\ref{algo:persched_min} and~\ref{algo:persched_min2}.
	\item The complexity of \iis
\end{itemize}

To reduce the execution time, we proceed as follows: instead of implementing
the set $\mathcal{A}$, we implement a heap $\tilde{\mathcal{A}}$ that could be
summarized as \[\{\app|\app \text{ is not yet known to {\bf not} be
schedulable}\}\] sorted following the lexicographic order:
$\left(\frac{\tp}{\yieldp},\frac{\wk}{\tio}\right)$. 
Hence, we replace the while loops on lines~\ref{algo:persched_while} and \ref{algo:persched_while2}
by the algorithm snippet described in Algorithm~\ref{algo:snippet}. The idea is to avoid calling \iis after each new inserted instance to know which applications are schedulable.
{
\begin{algorithm}[tbh]
\caption{Schedulability snippet \label{algo:snippet}}
\setcounter{AlgoLine}{10}
	$\tilde{\mathcal{A}} = \cup_{k}\{\app \}$ (sorted by $\left(\frac{\tp}{\yieldp},\frac{\wk}{\tio}\right)$)\;
		\While{$\tilde{\mathcal{A}}\neq \emptyset$}
		{
			Let \app be the minimum element of $\tilde{\mathcal{A}}$\;
			$\tilde{\mathcal{A}}\gets\tilde{\mathcal{A}}\setminus \{\app\} $\;
			Let $\sched'=$\iis\!\!$(\sched,\app)$\;
			\If{$\sched'\neq\sched$}{
				$\sched \gets \sched'$\;
				Insert \app in $\tilde{\mathcal{A}}$ following $\left(\frac{\tp}{\yieldp},\frac{\wk}{\tio}\right)$\;
			}
			
		}

\end{algorithm}
}

We then need to show that they are equivalent, that is:
\begin{itemize}
	\item At all time, the minimum element of $\tilde{\mathcal{A}}$ is minimal
amongst the schedulable applications with respect to the order
$\left(\frac{\tp}{\yieldp},\frac{\wk}{\tio}\right)$ (shown in
Lemma~\ref{lem:min});
	\item If $\tilde{\mathcal{A}} = \emptyset$ then there are no more
schedulable applications (shown in Corollary~\ref{coro:subset}).
\end{itemize}
To show this, it is sufficient to show that (i) at all time, $\mathcal{A}
\subset \tilde{\mathcal{A}}$, and (ii) $\tilde{\mathcal{A}}$ is always sorted
according to $\left(\frac{\tp}{\yieldp},\frac{\wk}{\tio}\right)$.

\begin{definition}[Compact pattern]
  We say that a pattern \[\sched = \cup_{k=1}^{\maxk} \left ( \nper,
  \cup_{i=1}^{\nper} \{\init{k}{i},\initIO{k}{i},\band{k}{}\}\right )\] is
  compact if for all $1 \leq i < \nper$, either $\init{k}{i}+\wk =
  \initIO{k}{i}$, or for all $t \in [\init{k}{i},\initIO{k}{i}]$, $\sum_l \pk[l]
  \band{l}{t} = \maxB$.
\end{definition}
Intuitively, this means that we can only schedule a new instance for all
application \app between \inst{\nper} and \inst{1}.

\begin{lemma}
	\label{lem:compacity}
  At any time during \persched, \sched is compact.
\end{lemma}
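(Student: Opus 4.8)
The plan is to prove the slightly stronger invariant that \emph{every} instance currently in \sched is internally compact --- meaning that for each instance \inst{i} of each application \app, either $\init{k}{i}+\wk = \initIO{k}{i}$, or the system is saturated ($\sum_l \pk[l]\band{l}{t} = \maxB$) throughout the idle interval between the end of its computation ($\init{k}{i}+\wk$) and the start of its I/O ($\initIO{k}{i}$) --- and to argue by induction on the number of successful insertions performed by \persched. This strengthening is what makes the induction go through: the definition only constrains the instances with $i<\nper$, but it is precisely the instance that is currently \emph{last} which must already be compact by the time a later instance is appended after it. Since \sched is reset to $\{\}$ at the top of each pattern-size trial and the empty pattern is vacuously compact, and since an unsuccessful call to \iis returns \sched unchanged, it suffices to show that one \emph{successful} call to \iis or \iisFirst preserves the invariant. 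Note that compactness depends only on these two insertion routines, not on the order in which applications are selected, so the heap and the schedulability snippet are irrelevant to this lemma.

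\textbf{Existing instances stay compact.} A successful insertion of an instance of \app never alters the dates or the bandwidth profile of any already-scheduled instance; it only adds bandwidth for the new one. Hence for any pre-existing instance the equality case $\init{l}{i}+\wk[l] = \initIO{l}{i}$ is trivially preserved. For the saturation case, on any interval where $\sum_l \pk[l]\band{l}{t} = \maxB$ already holds the available bandwidth $\maxB - \sum_l \pk[l]\band{l}{t}$ is $0$; since \iis only ever deposits the bandwidth $B_l = \min(\pk\bandwidth,\text{available})$, which is $0$ there, and \iisFirst likewise never exceeds \maxB, such an interval remains exactly saturated. Thus every previously compact instance stays compact.

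\textbf{The new instance is compact.} The crux is the greedy left-to-right filling of \iis. Recall that the events are exactly the breakpoints of the total-bandwidth profile, so each $B_l$ and each available-bandwidth value is constant on $[\event_l,\event_{l+1}]$. A successful call appends the new instance of \app immediately after its most recently inserted instance, so its computation ends at $\event_0$ and its I/O is laid down by scanning $\event_0,\event_1,\dots$ and using the full available bandwidth $B_l$ on each interval. Consequently $\initIO{k}{i}$ for the new instance is the first event $\event_l\ge\event_0$ with $B_l>0$. If that event is $\event_0$ there is no idle (equality case); otherwise every interval skipped before it had $B_l=0$, which (as $\pk\bandwidth>0$) forces the available bandwidth to vanish there, i.e. $\sum_l \pk[l]\band{l}{t}=\maxB$ on the entire idle span (saturation case). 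Either way the new instance is compact. Appending it turns \app's former last instance into an interior one ($i<\nper$), but by the induction hypothesis that instance was already compact at insertion, so the whole pattern remains compact. The same ``leftmost, full-bandwidth'' reasoning applies to \iisFirst, whose water-filling begins the transfer at the earliest time carrying positive available bandwidth; and in any case the single first instance of an application satisfies the requirement vacuously.

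\textbf{Main obstacle.} The bandwidth bookkeeping is routine; the delicate point is selecting the right induction hypothesis. One must recognize that the definition's restriction to $i<\nper$ cannot be maintained directly: the real content is that \iis produces a compact instance \emph{at the moment of insertion}, even for the instance that is momentarily last and therefore not yet constrained, so that no repair is needed once it ceases to be last. The remaining detail to pin down carefully is that \iisFirst genuinely shares this leftmost, maximal-bandwidth placement, so that the first instance of each application is also compact and the invariant is truly preserved across every insertion.
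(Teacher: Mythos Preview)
Your proof is correct and follows the same underlying idea as the paper --- namely, that compactness holds by construction because \iis always appends greedily right after the previous instance and \iisFirst handles the one-instance case --- but the paper's own argument is a two-sentence sketch, whereas you explicitly set up the induction, separate the preservation of existing instances from the compactness of the new one, and articulate the key strengthening (tracking compactness of \emph{all} instances, including the current last one) that makes the induction close. The only point on which you are slightly more tentative than necessary is \iisFirst: since the first instance has no constraint on $\init{k}{1}$, one can simply set $\init{k}{1}=\initIO{k}{1}-\wk$, so the equality case holds outright and no appeal to the internals of water-filling is needed.
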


\begin{proof}
  For each application, either we use \iisFirst to insert the first instance (so \sched is compact as there is only one instance of an application at this step), either we use
  \iis which inserts an instance just after the last inserted one, which is the definition of being compact. Hence, \sched is compact at any time during \persched.
\end{proof}

\begin{lemma}
\label{lem:return_value}
$\iis(\sched,\app)$ returns \sched, if and only if \app is not schedulable.
\end{lemma}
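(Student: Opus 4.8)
The plan is to prove the equivalence by reading off, from the final conditional of \iis, exactly when the procedure leaves \sched untouched: it returns \sched in the \texttt{if} branch, i.e.\ precisely when \texttt{DataLeft} $>0$ after the \texttt{while} loop, and otherwise returns the enlarged pattern built by \texttt{addInstance}, which carries one extra instance of \app and is therefore strictly larger than \sched. So the whole lemma reduces to showing that \texttt{DataLeft} $>0$ holds at loop exit if and only if \app is not schedulable. First I would dispatch the degenerate case where \app has no instance: there \iis delegates to \iisFirst, and I would appeal to the optimality of its water-filling placement, concluding that it fails to seat a first instance exactly when the period admits no feasible computation-plus-I/O slot, which is the meaning of non-schedulability for a fresh application. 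The core of the argument is the case where \app already owns at least one instance.

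There the quantitative fact I would establish is that the greedy loop of \iis places the maximum I/O volume the pattern can absorb in the window it scans. On each sub-interval $[\event_l,\event_{l+1}]$ the algorithm transfers at rate $B_l$, the minimum of $\pk\bandwidth$ and the residual system bandwidth $\maxB-\sum_l \pk[l]\band{l}{t}$ (constant on that sub-interval, as it contains no event); this is the largest admissible rate, and the loop only halts early once the demand \io is met. By additivity over the sub-intervals, the total volume it can seat therefore equals
\[
\int \min\!\left(\pk\bandwidth,\ \maxB-\sum_l \pk[l]\band{l}{t}\right)dt
\]
over the window running from $\intK{i}+\wk$ to the start of $\intK{(i+1)\bmod\lk}$. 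Hence \texttt{DataLeft} $>0$ at exit if and only if this integral is strictly smaller than \io, which is exactly the failure of the schedulability test of Definition~\ref{def:schedulable} at the position of the last instance, $i=\nper$.

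It then remains to collapse the existential quantifier of Definition~\ref{def:schedulable} onto that single position, and this is where I would invoke Lemma~\ref{lem:compacity}. For every $i<\nper$ compactness forces one of two situations: either $\init{k}{i}+\wk=\initIO{k}{i}$, so the integration window $[\init{k}{i}+\wk,\initIO{k}{i}-\wk]$ is empty and the test integral is $0<\io$; or $\sum_l \pk[l]\band{l}{t}=\maxB$ throughout $[\init{k}{i},\initIO{k}{i}]$, so the integrand vanishes identically and the test integral is again $0<\io$. No interior position can thus pass the test, so \app is schedulable if and only if the test passes at $i=\nper$. Chaining the equivalences yields the claim: \iis returns \sched $\iff$ \texttt{DataLeft} $>0$ $\iff$ the last-window integral is $<\io$ $\iff$ \app is not schedulable.

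The hard part will be reconciling the window actually scanned by \iis with the interval written in Definition~\ref{def:schedulable}: \iis runs its events up to the beginning of $\intK{(i+1)\bmod\lk}$, whereas the definition integrates only up to $\initIO{k}{i}-\wk$, the gap between them being the room set aside for the new instance's own computation \wk. I would close this by spelling out that, on a compact pattern, seating the new instance reserves its computation chunk and leaves exactly the definition's interval available for its I/O, so the two windows delimit the same admissible region; a secondary point to check is that the order in which the greedy loop fills sub-intervals is irrelevant to the total seated volume, which is immediate from additivity of the integral.
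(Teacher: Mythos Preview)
Your proposal is correct and follows the same line as the paper's proof: both reduce the question to the terminal test $\texttt{DataLeft}>0$ and then invoke compactness (Lemma~\ref{lem:compacity}) to argue that the single gap after the last inserted instance is the only place where schedulability can possibly hold. The paper's proof is considerably terser---it asserts that ``one can easily check'' \iis tests only that gap and that compactness makes this sufficient---whereas you spell out the greedy-is-optimal argument, the two compactness cases, and the zero-instance branch via \iisFirst, and you flag the bookkeeping mismatch between the window scanned by \iis and the bounds written in Definition~\ref{def:schedulable}; none of this changes the strategy, it just fills in details the paper leaves implicit.
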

\begin{proof}
One can easily check that \iis checks the schedulability of \app only between
the last inserted instance of \app and the first instance of \app. Furthermore,
because of the compacity of \sched (Lemma~\ref{lem:compacity}), this is
sufficient to test the overall schedulability.

Then the test is provided by the last condition $\texttt{Dataleft} > 0$.
  \begin{compactitem}
   \item If the condition is false, then the algorithm actually inserts a new
    instance, so it means that one more instance of \app is schedulable.
   \item If the condition is true, it means that we cannot insert a new
    instance after the last inserted one. Because \sched is compact, we cannot
    insert an instance at another place. So if the condition is true, we cannot
    add one more instance of \app in the pattern.
  \end{compactitem}
\end{proof}
\begin{corollary}
	\label{coro:sched}
In Algorithm~\ref{algo:snippet}, an application \app is removed from
$\tilde{\mathcal{A}}$ if and only if it is not schedulable.
\end{corollary}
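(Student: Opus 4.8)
The plan is to prove Corollary~\ref{coro:sched} as a direct consequence of Lemma~\ref{lem:return_value}, by tracing exactly when and why an application is removed from the heap $\tilde{\mathcal{A}}$ in Algorithm~\ref{algo:snippet}. The statement is an ``if and only if,'' so I would split it into the two implications and handle each by inspecting the control flow of the snippet.

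First I would examine the structure of the \texttt{while} loop. On each iteration the minimum element \app is extracted from $\tilde{\mathcal{A}}$ (the line $\tilde{\mathcal{A}}\gets\tilde{\mathcal{A}}\setminus\{\app\}$), so every application is removed at least temporarily at the moment it is processed. The key observation is that \app is \emph{re-inserted} precisely when the call $\sched'=\iis(\sched,\app)$ returns a pattern different from \sched, i.e. when $\sched'\neq\sched$. Thus \app stays out of $\tilde{\mathcal{A}}$ exactly when $\iis(\sched,\app)=\sched$.

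For the forward direction (removed permanently $\Rightarrow$ not schedulable): if \app is removed and not re-inserted, then by the \texttt{if} test we had $\iis(\sched,\app)=\sched$, and by Lemma~\ref{lem:return_value} this means \app is not schedulable at that point. For the reverse direction (not schedulable $\Rightarrow$ removed): when \app is processed, Lemma~\ref{lem:return_value} gives $\iis(\sched,\app)=\sched$, so the re-insertion branch is not taken and \app remains removed. One subtlety to address is that schedulability is relative to the current pattern \sched, which changes as instances are inserted; here I would note that once \iis returns \sched unchanged for a given application, subsequent insertions of \emph{other} applications only consume bandwidth, so \app cannot become schedulable again, making ``not schedulable'' well-defined as a terminal condition and justifying that no re-examination is needed.

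I expect the main obstacle to be stating the argument cleanly despite the ``temporary removal then possible re-insertion'' mechanics of the heap: the word ``removed'' in the corollary must be read as ``removed and not put back,'' and I would make this reading explicit to avoid the trivial-but-wrong reading that every processed application is removed regardless of schedulability. Once that is pinned down, the corollary follows almost immediately from Lemma~\ref{lem:return_value}, with the compacity of \sched (Lemma~\ref{lem:compacity}) implicitly underwriting why testing only between the last and first instance suffices.
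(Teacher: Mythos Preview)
Your proposal is correct and matches the paper's approach: the paper states the corollary with no proof, treating it as immediate from Lemma~\ref{lem:return_value} together with the control flow of Algorithm~\ref{algo:snippet}, which is exactly the argument you spell out. One minor remark: the monotonicity observation you raise (that an application cannot become schedulable again after other insertions) is not needed for this corollary as stated---it is the content of the paper's Lemma~\ref{notschedulable} and is invoked separately to obtain Corollary~\ref{coro:subset}.
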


\begin{lemma}
  If an application is not schedulable at some step, it will not be
  either in the future.
  \label{notschedulable}
\end{lemma}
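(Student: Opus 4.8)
The plan is to exploit the \emph{monotone} nature of \persched: within either inner \texttt{while} loop, each iteration only \emph{inserts} one new instance, and by construction both \iisFirst and \iis leave the dates and bandwidth of already-scheduled instances untouched, so no previously placed instance is ever moved or removed. Consequently, for every fixed time $t$, the total used bandwidth $\sum_l \pk[l]\band{l}{t}$ is non-decreasing from one step to the next, and hence the available bandwidth $\maxB - \sum_l \pk[l]\band{l}{t}$ is pointwise non-increasing over the course of the algorithm. This single observation drives the whole argument.

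First I would fix an application \app that is declared not schedulable at some step, and consider any later step. Since \app was not schedulable, Lemma~\ref{lem:return_value} guarantees that the call to \iis returned \sched unchanged, so \app received no new instance; in particular all of its dates \init{k}{i} and \initIO{k}{i} remain exactly as they were. Therefore, for each $1 \leq i \leq \nper$, the endpoints of the integral in the schedulability condition~\eqref{eq:schedulable} are unchanged between the two steps, so only the integrand can have varied.

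Next I would compare the integrands. For each fixed $i$ and each $t$ in $[\init{k}{i}+\wk, \initIO{k}{i}-\wk]$, the quantity $\min(\pk\bandwidth, \maxB - \sum_l \pk[l]\band{l}{t})$ can only have decreased, since its first argument is constant while the second is non-increasing by the observation above. Hence every integral in~\eqref{eq:schedulable} is non-increasing across steps, and as none of them reached \io at the earlier step (the very definition of not schedulable), none can reach \io later; thus \app remains not schedulable. The same monotonicity also dispatches the case in which \app has no instance yet: \iisFirst performs a water-filling over the entire pattern, and because the available bandwidth is everywhere non-increasing, a volume \io that could not be fit before cannot be fit afterward.

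The step I expect to require the most care is justifying that no insertion ever alters the integration bounds of the non-schedulable application. This rests entirely on the ``do not modify already-scheduled instances'' property of the insertion routines, together with the compactness guaranteed by Lemma~\ref{lem:compacity}, which ensures that the only candidate gap is the fixed one between \app's last and first instances. Once that is secured, the remainder is a routine pointwise-monotonicity comparison of the integrand.
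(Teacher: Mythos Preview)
Your proposal is correct and follows essentially the same approach as the paper: both argue that the total used bandwidth is pointwise non-decreasing across steps, so the integrand in the schedulability condition is non-increasing and an integral that was below \io stays below \io. You are actually more careful than the paper in explicitly justifying that the integration bounds for \app are unchanged and in handling the zero-instance case, but the core argument is identical.
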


\begin{proof}
  Let us suppose that \app is not schedulable at some step. In the future, new instances of other applications can be added, thus possibly increasing the total bandwidth used at each instant.
  The total I/O load is non-decreasing during the execution of the algorithm. Thus if for all $i$, we had 
\[
\int_{\init{k}{i}+\wk}^{\initIO{k}{i}-\wk} \min\left (\pk \bandwidth,\maxB - \sum_l \pk[l] \band{l}{t} \right )dt < \io,
\]
  then in the future, with new bandwidths used $\bandprime{l}{t} > \band{l}{t}$, we will still have that for all $i$,
\[
\int_{\init{k}{i}+\wk}^{\initIO{k}{i}-\wk} \min\left (\pk \bandwidth,\maxB - \sum_l \pk[l] \bandprime{l}{t} \right )dt < \io.
\]
\end{proof}

\begin{corollary}
	\label{coro:subset}
At all time, 
\[\mathcal{A} = \{\app|\app \text{ is schedulable}\} \subset \tilde{\mathcal{A}}.\]
\end{corollary}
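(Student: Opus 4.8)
The plan is to prove the inclusion as an invariant of the \texttt{while} loop in Algorithm~\ref{algo:snippet}, established by induction on the number of times the loop body is executed. Since the only set being mutated is $\tilde{\mathcal{A}}$ and the only source of information about schedulability is the return value of \iis, the argument reduces to tracking exactly when an application permanently leaves $\tilde{\mathcal{A}}$ and comparing that with its schedulability status at that moment.

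For the base case, I would note that $\tilde{\mathcal{A}}$ is initialized on line~11 to $\cup_k\{\app\}$, i.e. it contains \emph{every} application; hence $\mathcal{A} \subseteq \tilde{\mathcal{A}}$ holds trivially before the first iteration, regardless of which applications happen to be schedulable. For the inductive step, I would observe that a single iteration first removes the minimum element \app from $\tilde{\mathcal{A}}$ (line~14) and then re-inserts it (line~18) precisely when $\sched' \neq \sched$, that is, when \iis has successfully scheduled a new instance. By Lemma~\ref{lem:return_value} (equivalently Corollary~\ref{coro:sched}), the case $\sched' = \sched$, in which \app is \emph{not} re-inserted, occurs if and only if \app is not schedulable at that step. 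Thus an application leaves $\tilde{\mathcal{A}}$ for good exactly at a step where it is not schedulable.

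The final ingredient is the monotonicity of non-schedulability provided by Lemma~\ref{notschedulable}: once an application is not schedulable it remains so at every subsequent step, because the total bandwidth used at each instant only grows as further instances are inserted. Combining the two facts, every application that has been permanently discarded from $\tilde{\mathcal{A}}$ was not schedulable at the moment of removal and is therefore still not schedulable; contrapositively, any application that is currently schedulable, i.e. lies in $\mathcal{A}$, has never been discarded and hence still belongs to $\tilde{\mathcal{A}}$. This yields $\mathcal{A} \subset \tilde{\mathcal{A}}$ and closes the induction.

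The hard part is not this invariant argument itself, which is routine once the supporting lemmas are available, but rather the fact (already dispatched earlier) that a \emph{local} schedulability test suffices: \iis only probes the gap between the last inserted instance of \app and its first instance, so a priori it could wrongly declare \app unschedulable while room still exists elsewhere in the pattern. This is exactly why compacity (Lemma~\ref{lem:compacity}) is needed, feeding into Lemma~\ref{lem:return_value}; with those in hand, the reasoning here is just invariant maintenance.
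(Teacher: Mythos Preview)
Your argument is correct and follows exactly the route the paper takes: the paper's proof is the single sentence ``This is a direct corollary of Corollary~\ref{coro:sched} and Lemma~\ref{notschedulable},'' and your invariant-maintenance induction is precisely the unpacking of that sentence (with the base case made explicit). The closing paragraph on compacity is accurate context but belongs to the proof of Lemma~\ref{lem:return_value}, not to this corollary.
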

This is a direct corollary of Corollary~\ref{coro:sched} and Lemma~\ref{notschedulable}
\begin{lemma}
\label{lem:min}
At all time, the minimum element of $\tilde{\mathcal{A}}$ is minimal amongst the
schedulable applications with respect to the order
$\left(\frac{\tp}{\yieldp},\frac{\wk}{\tio}\right)$ (but not necessarily
schedulable).
\end{lemma}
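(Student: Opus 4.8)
The plan is to derive the statement directly from the two facts singled out just before the lemma: that at all times $\mathcal{A} \subseteq \tilde{\mathcal{A}}$, and that $\tilde{\mathcal{A}}$ remains sorted according to the lexicographic order $\left(\frac{\tp}{\yieldp},\frac{\wk}{\tio}\right)$. The first fact is exactly Corollary~\ref{coro:subset}, so I would only need to establish the sortedness invariant and then combine the two.

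To show sortedness, I would argue by induction over the operations performed on $\tilde{\mathcal{A}}$ in Algorithm~\ref{algo:snippet}. Initially $\tilde{\mathcal{A}} = \cup_k \{\app\}$ is built sorted. Thereafter the only two operations are (a) extracting the minimum, which trivially preserves sortedness, and (b) re-inserting a successfully scheduled application at the position dictated by its current key. The crucial observation is that a call to \iis that succeeds adds exactly one instance of a single application $\app$, hence increments only its own instance count \nper and leaves $\nper[l]$ untouched for every other application $\app[l]$. Since the key of $\app[l]$ is $\left(\frac{\tp[l]}{\yieldp[l]},\frac{\wk[l]}{\tio[l]}\right)$ with $\yieldp[l]=\nper[l]\wk[l]/\period$, all keys other than that of $\app$ are unchanged by the insertion. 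Consequently, re-inserting $\app$ at its new (and smaller, since \yieldp has grown) position restores a fully sorted structure, closing the induction.

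With sortedness in hand the conclusion is immediate: the minimum element $\app$ of $\tilde{\mathcal{A}}$ is, by sortedness, no larger in the order than every element of $\tilde{\mathcal{A}}$; and since every schedulable application belongs to $\tilde{\mathcal{A}}$ by Corollary~\ref{coro:subset}, $\app$ is a lower bound for $\mathcal{A}$. Thus $\app$ is minimal amongst the schedulable applications, even though it may itself fail to be schedulable — in which case it is simply discarded and the same argument re-applies to the new minimum.

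The main obstacle is the non-static nature of the ordering key: $\tp/\yieldp$ changes as $\yieldp=\nper\wk/\period$ grows with each inserted instance, so sortedness is not automatic and must be re-verified after every insertion. The whole argument hinges on isolating which keys actually move — only that of the re-inserted application — so that a single re-insertion suffices to repair the order. I would state this locality property explicitly, since it is precisely the reason the heap-based snippet is equivalent to recomputing $\mathcal{A}$ from scratch.
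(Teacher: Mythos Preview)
Your proposal is correct and follows essentially the same approach as the paper: both invoke Corollary~\ref{coro:subset} for the containment $\mathcal{A}\subseteq\tilde{\mathcal{A}}$ and then argue that sortedness is preserved because a successful \iis call changes only the key of the single re-inserted application, leaving all other keys intact. Your write-up is in fact more explicit (you separately handle the extract-min operation and spell out why $\yieldp[l]$ is unchanged for $l\neq k$), but the underlying argument is the same.
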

\begin{proof}
First see that $\{\app|\app \text{ is schedulable}\} \subset
\tilde{\mathcal{A}}$.

Furthermore, initially the minimality property is true. Then the set
$\tilde{\mathcal{A}}$ is modified only when a new instance of an application is
added to the pattern. More specifically, only the application that was modified
has its position in $\tilde{\mathcal{A}}$ modified. One can easily verify that
for all other applications, their order with respect to
$\left(\frac{\tp}{\yieldp},\frac{\wk}{\tio}\right)$ has not changed, hence the
set is still sorted.
\end{proof}

This concludes the proof that the snippet is equivalent to the while loops.
With all this we are now able to show timing results for the version of
Algorithm~\ref{algo:persched} that uses Algorithm~\ref{algo:snippet}.

\begin{lemma}
	\label{lem:while2_compl}
The loop on line~\ref{algo:persched_while2} of Algorithm~\ref{algo:persched}
terminates in at most $\lceil1/\varepsilon\rceil$ steps.
\end{lemma}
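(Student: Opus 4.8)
Lemma~\ref{lem:while2_compl} asserts the second loop (line~\ref{algo:while_main2}) terminates in at most $\lceil 1/\varepsilon\rceil$ steps.

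Let me look at the second loop structure. It starts at $T = T_{\opt}$, and each successful iteration decreases $T$ by $(T_{\opt} - T_{\opt}/(1+\varepsilon))/\lfloor 1/\varepsilon\rfloor$.

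So the step size is $\Delta = \frac{T_{\opt} - T_{\opt}/(1+\varepsilon)}{\lfloor 1/\varepsilon\rfloor}$.

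The loop decreases $T$ from $T_{\opt}$ toward $T_{\opt}/(1+\varepsilon)$. The total range is $T_{\opt} - T_{\opt}/(1+\varepsilon)$, and each step decreases by $\Delta = \frac{T_{\opt} - T_{\opt}/(1+\varepsilon)}{\lfloor 1/\varepsilon\rfloor}$.

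So after $\lfloor 1/\varepsilon\rfloor$ steps, $T$ would go below $T_{\opt}/(1+\varepsilon)$. So the loop must terminate once $T < T_{\opt}/(1+\varepsilon)$... but wait, why would it terminate?

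The loop terminates (returns $\sched_{\opt}$) when the condition $\maxThrough(\sched) = \frac{T_{\opt}}{T}\cdot\texttt{SE}$ is FALSE.

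So I need to argue: when $T$ gets below $T_{\opt}/(1+\varepsilon)$, the condition fails. Let me think about why.

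The quantity $\texttt{SE}$ is the best \maxThrough found in loop 1, achieved at $T_{\opt}$. The test asks whether reducing the period scales SE proportionally. The idea: when $T < T_{\opt}/(1+\varepsilon)$, a pattern at this smaller $T$ couldn't achieve the scaled-up throughput, because the first loop already examined $T_{\opt}/(1+\varepsilon)$ (the previous trial value before $T_{\opt}$ in the geometric sequence) and found its throughput was at most SE.

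Let me reason about the invariant more carefully, since that's the crux.

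**Proof plan.**

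The plan is to track the value of $T$ across iterations of the loop on line~\ref{algo:while_main2} and show that the loop cannot survive more than $\lceil 1/\varepsilon\rceil$ successful passes before the test on line~\ref{algo:persched.se_test} fails. First I would observe that each time the test succeeds, $T$ is decremented by the fixed amount $\Delta = \bigl(T_{\opt} - \frac{T_{\opt}}{1+\varepsilon}\bigr)/\lfloor 1/\varepsilon\rfloor$, so that after $j$ successful iterations the period under trial is $T = T_{\opt} - j\,\Delta$. Since $\lfloor 1/\varepsilon\rfloor \leq 1/\varepsilon \leq \lceil 1/\varepsilon\rceil$, after at most $\lfloor 1/\varepsilon\rfloor$ successful iterations we reach $T \leq T_{\opt} - \lfloor 1/\varepsilon\rfloor\,\Delta = \frac{T_{\opt}}{1+\varepsilon}$, i.e. $T$ drops to or below the trial value $T_{\opt}/(1+\varepsilon)$ that immediately preceded $T_{\opt}$ in the geometric sweep of the first loop.

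The key step, which I expect to be the main obstacle, is to argue that once $T \leq \frac{T_{\opt}}{1+\varepsilon}$ the test on line~\ref{algo:persched.se_test} must fail, so the loop returns. The intuition is that the throughput $\maxThrough(\sched)$ produced by the deterministic insertion process at period $T$ is nonincreasing as $T$ shrinks below $T_{\opt}$: a smaller period admits no more instances of any application than a larger one (by the bound $\nmax$ on instances per pattern and the monotonicity of schedulability in Lemma~\ref{notschedulable}), while \maxThrough is estimated via $\yieldp = \nper\wk/\period$. One must show that the equality $\maxThrough(\sched) = \frac{T_{\opt}}{T}\cdot \texttt{SE}$ cannot hold at $T = T_{\opt}/(1+\varepsilon)$, because the first loop already tried a period of exactly this size (or arbitrarily close to it) and recorded a throughput at most \texttt{SE}; the scaling factor $T_{\opt}/T = 1+\varepsilon > 1$ then makes the required right-hand side strictly larger than what the insertion at this period can achieve.

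Finally I would assemble these two observations: the number of successful iterations is bounded by $\lfloor 1/\varepsilon\rfloor \leq \lceil 1/\varepsilon\rceil$, and the first unsuccessful test forces an immediate \textbf{return}, so the total number of iterations is at most $\lceil 1/\varepsilon\rceil$. The delicate point to handle carefully is the boundary case where $1/\varepsilon$ is not an integer, so that $\lfloor 1/\varepsilon\rfloor\,\Delta$ may land strictly below $T_{\opt}/(1+\varepsilon)$ rather than exactly on it; here I would verify that overshooting only makes the test fail sooner, which still respects the stated bound of $\lceil 1/\varepsilon\rceil$ steps.
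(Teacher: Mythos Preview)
Your proposal follows essentially the same route as the paper: both arguments (i) track the linear decrement of $T$ by the fixed step $\Delta = \bigl(T_{\opt}-\tfrac{T_{\opt}}{1+\varepsilon}\bigr)/\lfloor 1/\varepsilon\rfloor$ until $T$ reaches $T_{\opt}/(1+\varepsilon)$, and (ii) use the maximality of \texttt{SE} from the first loop to conclude that at $T=T_{\opt}/(1+\varepsilon)$ the equality test on line~\ref{algo:persched.se_test} cannot hold, since it would require $\maxThrough(\sched)=(1+\varepsilon)\,\texttt{SE}>\texttt{SE}$ while the first loop already recorded $\maxThrough$ at that period to be at most \texttt{SE}. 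Your write-up is in fact more explicit than the paper's on why the test fails (the paper's second displayed inequality is oddly stated), and your worry about overshooting is unnecessary since $\lfloor 1/\varepsilon\rfloor\cdot\Delta$ lands exactly on $T_{\opt}/(1+\varepsilon)$; but the substance of the argument is the same.
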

\begin{proof}
  The stopping criteria on line~\ref{algo:persched.se_test} checks that the
  number of instances did not change when reducing the pattern size. Indeed, by
  definition for a pattern~\sched,
\begin{align*}
\maxThrough(\sched) &= \sum_{k} \pk \yieldp \\
& = \frac{\sum_k \pk \nper \wk}{\period}.
\end{align*}

Denote $\texttt{SE}$ the \maxThrough reached in $T_\opt$ at the end of the
while loop on line~\ref{algo:persched_while} of Algorithm~\ref{algo:persched}.
Let $\maxThrough(\sched)$ be the \maxThrough obtained in
$T_{\opt}/(1+\varepsilon)$. By definition, 
\begin{align*}
\maxThrough(\sched) &< \texttt{SE} &\text{and}\\
\frac{T_{\opt}}{1+\varepsilon}\maxThrough(\sched) &< T_{\opt}\texttt{SE}. &
\end{align*}

Necessarily, after at most $\lceil1/\varepsilon\rceil$ iterations,
Algorithm~\ref{algo:persched} exits the loop on line~\ref{algo:persched_while2}.

\end{proof}


\begin{proof}[Proof of Theorem~\ref{thm:complexity}]
There are $\lfloor m \rfloor$ pattern sizes tried where $\minperiod \cdot
(1+\varepsilon)^m = \maxperiod$ in the main ``while'' loop
(line~\ref{algo:while_main}), that is
\[m = \frac{\log \maxperiod - \log \minperiod}{ \log (1+\varepsilon)}= \frac{\log K'}{ \log (1+\varepsilon)}.\]
Furthermore, we have seen (Lemma~\ref{lem:while2_compl}) that there are a
maximum of $\lceil1/\varepsilon\rceil$ pattern sizes tried of the second loop
(line~\ref{algo:while_main2}).

For each pattern size tried, the cost is dominated by the complexity of
Algorithm~\ref{algo:snippet}. Let us compute this complexity.
\begin{itemize}
	\item The construction of $\tilde{\mathcal{A}}$ is done in
$O(\maxk\log\maxk)$.
	\item In sum, each application can be inserted a maximum of \nmax times in
$\tilde{\mathcal{A}}$ (maximum number of instances in any pattern), that is the
total of all insertions has a complexity of $O(\maxk\log\maxk \nmax)$.
\end{itemize}

We are now interested by the complexity of the different calls to \iis.

First one can see that we only call \iisFirst \maxk times, and in particular
they correspond to the first \maxk calls of \iis.
Indeed, we always choose to insert a new instance of the application that has
the largest current slowdown. The slowdown is infinite for all applications at
the beginning, until their first instance is inserted (or they are
removed from $\tilde{\mathcal{A}}$) when it becomes finite, meaning that the \maxk first insertions will be
the first instance of all applications.

During the $k$-th call, for $1 \leq k \leq \maxk$, there will be $n = 2(k-1)+2$
events (2 for each previously inserted instances and the two bounds on the
pattern), meaning that the complexity of \iisFirst will be $O(n\log{n})$
(because of the sorting of the bandwidths available by non-increasing order to
choose the intervals to use). So overall, the \maxk first calls have a
complexity of $O(\maxk^2\log{\maxk})$.

Furthermore, to understand the complexity of the remaining calls to \iis we are
going to look at the end result. 
In the end there is a maximum of \nmax instance
of each applications, that is a maximum of $2\nmax \maxk$ events.
For all application \app, for all instance \inst{i}{k}, $1<i\leq\numk$, the only
events considered in \iis when scheduling \inst{i}{k} were the ones between the
end of $\init{i}{k}+\wk$ and $\init{i}{k+1}$. Indeed, since the schedule has
been able to schedule \io, \iis will exit the while loop on
line~\ref{iis.line.while}.
Finally, one can see that the events considered for all instances of an
application partition the pattern without overlapping. Furthermore, \iis has a
linear complexity in the number of events considered. Hence a total complexity
by application of $O(\nmax \maxk)$. Finally, we have \maxk applications, the
overall time spent in \iis for inserting new instances is $O(\maxk^2 \nmax)$.

Hence, with the number of different pattern tried, we obtain a complexity of
\[O\left ( \left (\lceil m \rceil + \left\lceil\frac{1}{\varepsilon}\right\rceil\right ) \left (\maxk^2\log{\maxk} + \maxk^2\nmax \right ) \right).\] 


\end{proof}

Note that in practice, both $K'$ and \maxk are small ($\approx10$), and
$\varepsilon$ is close to 0, hence making the complexity $O\left
(\frac{\nmax}{\varepsilon} \right )$.

\subsection{High-level implementation, proof of concept}
	\label{sec.implem}
We envision the implementation of this periodic scheduler to take place at two
levels:

1) The job scheduler would know the applications profile (using solutions such
as Omnisc'IO~\cite{dorier2014omnisc}). Using profiles it would be in charge of
computing a periodic pattern every time an application enters or leaves the
system.

2) Application-side I/O management strategies (such
as~\cite{zhang2012opportunistic,lofstead2010managing,tessier2016topology}) then would be responsible
to ensure the correct transfer of I/O at the right time by limiting the
bandwidth used by nodes that transfer I/O. The start and end time for each I/O as well as the used bandwidth are described in input files. 
	\section{Evaluation and model validation}
	\label{sec.simu}

{\em Note that the data used for this section and the scripts to generate the
figures are available at \url{https://github.com/vlefevre/IO-scheduling-simu}.}

In this section, we (i) assess the efficiency of our algorithm by comparing it
to a recent dynamic framework~\cite{gainaru2015scheduling}, and (ii) validate
our model by comparing theoretical performance (as obtained by the simulations)
to actual performance on a real system.
	
We perform the evaluation in three steps: first we simulate behavior of
applications and input them into our model to estimate both \minUserCong and
\maxThrough of our algorithm (Section~\ref{sec.simu.results}) and evaluate these
cases on an actual machine to confirm the validity of our model. Finally, in
Section~\ref{sec.simu.tmax} we confirm the intuitions introduced in
Section~\ref{sec.algo} to determine the parameters used by \persched.

	\subsection{Experimental Setup}
	\label{sec.simu.setup}

The platform available for experimentation is Jupiter at Mellanox, Inc. To be
able to verify our model, we use it to instantiate our platform model.
Jupiter is a Dell PowerEdge R720xd/R720 32-node cluster using Intel Sandy Bridge
CPUs. Each node has dual Intel Xeon 10-core CPUs running at 2.80 GHz, 25 MB of
L3, 256 KB unified L2 and a separate L1 cache for data and instructions, each 32
KB in size. The system has a total of 64GB DDR3 RDIMMs running at 1.6 GHz per
node. Jupiter uses Mellanox ConnectX-3 FDR 56Gb/s InfiniBand and Ethernet VPI
adapters and Mellanox SwitchX SX6036 36-Port 56Gb/s FDR VPI InfiniBand switches. 

We measured the different bandwidths of the machine and obtained $b = 0.01$GB/s
and $B = 3$GB/s. Therefore, when 300 cores transfer at full speed (less than
half of the 640 available cores), congestion occurs.

\paragraph*{Implementation of scheduler on Jupiter}

We simulate the existence of such a scheduler by computing beforehand the I/O
pattern for each application and feeding it as input files. The experiments
require a way to control the exact moment when all applications perform I/O, use
the CPU or stay idle waiting to start their I/O. For this purpose, we modified
the IOR benchmark~\cite{iorbench} to read the input files that provide the start and end time for each I/O transfer as well as the bandwidth used. Our scheduler generates one such file for each application. 
Each IOR instance represents one application whose I/O pattern is described in
one of the generated scheduling files.
The IOR benchmark is split in different sets of processes running independently
on different nodes, where each set represents a different application. One
separate process acts as the scheduler and receives I/O requests for all groups
in IOR. Since we are interested in modeling the I/O delays due to congestion or
scheduler imposed delays, the modified IOR benchmarks do not use inter-processor
communications.

We made experiments on our IOR benchmark and compared the results
between periodic and online schedulers as well as with the performance of the
original IOR benchmark without any extra scheduler.

\subsection{Applications and scenarios}
\label{sec.simu.app}

In the literature, there are many examples of periodic applications.
Carns et
al.~\cite{carns200924} observed with Darshan the periodicity of four different
applications (MADBench2~\cite{carter2005performance}, Chombo I/O
benchmark~\cite{Chombo}, S3D IO~\cite{sankaran2006direct} and
HOMME~\cite{nair2007petascale}). Furthermore, in our previous
work~\cite{gainaru2015scheduling} we were able to verify the periodicity of
gyrokinetic toroidal code (GTC)~\cite{ethier2012petascale},
Enzo~\cite{enzoenzo}, HACC application~\cite{habib2012universe} and
CM1~\cite{bryan2002benchmark}.

Unfortunately, few documents give the actual values for \wk, \io and \pk. Liu et
al.~\cite{Liu12onthe} provide different periodic patterns of four scientific
applications: PlasmaPhysics, Turbulence1, Astrophysics and Turbulence2. They
were also the top four write-intensive jobs run on Intrepid in 2011. We chose
the most I/O intensive patterns for all applications (as they are the most
likely to create I/O congestion). We present these results in
Table~\ref{table.apps}. Note that to scale those values to our system, we
divided the number of processors \pk by 64, hence increasing \wk by 64. The I/O
volume stays constant.

\begin{table}[t]
\begin{center}
\begin{tabular}{|lr|r|r|r|}
\hline
	\app & &\wk (s) & \io (GB) & \pk \\
	\hline
	Turbulence1 &(T1)& 70 & 128.2 & 32,768 \\
	\hline
	Turbulence2 &(T2) & 1.2 & 235.8 & 4,096 \\
	\hline
	AstroPhysics &(AP) & 240 & 423.4 & 8,192 \\
	\hline
	PlasmaPhysics &(PP) & 7554 & 34304 & 32,768 \\
	\hline
\end{tabular}
\caption{Details of each application.}
\label{table.apps}
\end{center}
\end{table}

To compare our strategy, we tried all possible combinations of those
applications such that the number of nodes used equals 640. That is a total of
ten different scenarios that we report in Table~\ref{table.sets}.

\begin{table}[t]
\begin{center}
\setlength\tabcolsep{2.5pt}
\begin{tabular}{|c|c|c|c|c|}
\hline
Set \#& ~T1~ & ~T2~ & ~AP~ & ~PP~\\
\hline
1 & 0 & {\bf 10} & 0 & 0 \\
2 & 0 & {\bf 8} & {\bf 1} & 0 \\
3 & 0 & {\bf 6} & {\bf 2} & 0 \\
4 & 0 & {\bf 4} & {\bf 3} & 0 \\
5 & 0 & {\bf 2} & 0 & {\bf 1} \\
6 & 0 & {\bf 2} & {\bf 4} & 0 \\
7 & {\bf 1} & {\bf 2} & 0 & 0 \\
8 & 0 & 0 & {\bf 1} & {\bf 1} \\
9 & 0 & 0 & {\bf 5} & 0 \\
10 & {\bf 1} & 0 & {\bf 1} & 0 \\
\hline
\end{tabular}
\caption{Number of applications of each type launched at the same time for each experiment scenario.}
\label{table.sets}
\end{center}
\end{table}

\subsection{Baseline and evaluation of existing degradation}

We ran all scenarios on Jupiter without any additional scheduler. In all tested
scenarios congestion occurred and decreased the visible bandwidth used by each
applications as well as significantly increased the total execution time.
We present in Table~\ref{table.online} the average I/O bandwidth slowdown due to
congestion for the most representative scenarios together with the corresponding
values for \maxThrough. Depending on the IO transfers per computation ratio of
each application as well as how the transfers of multiple applications overlap,
the slowdown in the perceived bandwidth ranges between 25\% to 65\%. 

\begin{table}
\begin{center}
\begin{tabular}{|c|c|c|c|}
\hline
Set \#&  Application & BW slowdown & \maxThrough  \\
\hline
1 &  Turbulence 2 &  65.72\% & 0.064561 \\\hline
2 &  Turbulence 2 &  63.93\% & 0.250105 \\ 
& AstroPhysics & 38.12\% &  \\\hline
3 &  Turbulence 2 & 56.92\% & 0.439038\\ 
   & AstroPhysics & 30.21\% &  \\\hline
4 &  Turbulence 2 & 34.9\% & 0.610826 \\
   & AstroPhysics & 24.92\% &  \\\hline
6 & Turbulence 2 & 34.67\% & 0.621977 \\ 
& AstroPhysics & 52.06\% &  \\\hline
10 &   Turbulence 1 & 11.79\% & 0.98547 \\ 
	& AstroPhysics & 21.08\% &   \\\hline
\end{tabular}
\end{center}
\caption{Bandwidth slowdown, performance and application slowdown for each set
of experiments}
\label{table.online}
\end{table}

Interestingly, set 1 presents the worst degradation. This scenario is running
concurrently ten times the same application, which means that the I/O for all
applications are executed almost at the same time (depending on the small
differences in CPU execution time between nodes). This scenario could correspond
to coordinated checkpoints for an application running on the entire system. The
degradation in the perceived bandwidth can be as high as 65\% which
considerably increases the time to save a checkpoint. The use of I/O schedulers
can decrease this cost, making the entire process more efficient.

\subsection{Comparison to online algorithms}
\label{sec.simu.results}

In this subsection, we present the results obtained by running \persched and the
online heuristics from our previous work~\cite{gainaru2015scheduling}. Because
in~\cite{gainaru2015scheduling} we had different heuristics to optimize either
\minUserCong or \maxThrough, in this work, the \minUserCong and
\maxThrough presented are the best reached by {\em any} of those heuristics. 
This means that \emph{there are no online solution able to reach them both at
the same time}! We show that even in this scenario, our algorithm outperforms
these heuristics {\em for both optimization problems}!

\persched takes as input a list of applications, as well as the parameters,
presented in Section~\ref{sec.algo}, $K'=\frac{\maxperiod}{\minperiod}$,
$\varepsilon$. All scenarios were tested with $K'=10$ and $\varepsilon=0.01$.

\paragraph*{Simulation results}
We present in Table~\ref{table.results} all evaluation results.
The results obtained by running Algorithm~\ref{algo:persched} are called
\persched. To go further in our evaluation, we also look for the best
\minUserCong obtainable with our pattern (we do so by changing
line~\ref{algo:se} of \persched). We call this result \emph{min \minUserCong} in
Table~\ref{table.results}. This allows us to estimate how far the \minUserCong
that we obtain is from what we can do. Furthermore, we can compute an upper
bound to \maxThrough by replacing \yieldy by \tp in Equation~\eqref{eq:syseff}:
\begin{equation}
\label{eq:upperbound}
\text{Upper bound} =\frac{1}{N} \sum_{k=1}^\maxk \frac{\pk \wk}{\wk+\tio[k]} .
\end{equation}

\begin{table}[h]
\makegapedcells
\setcellgapes{2pt}
\setlength\tabcolsep{2.5pt}
\begin{center}
\resizebox{\columnwidth}{!}{
\begin{tabular}{|c||cc||cc|cc|}
	\hline
	\multirow{2}{*}{Set} & Min & Upper bound & \multicolumn{2}{c|}{\persched} & \multicolumn{2}{c|}{Online}\\
	\cline{4-7}
			&	\minUserCong	&	\maxThroughShort	&	\minUserCong	&	\maxThroughShort	&	\minUserCong	&	\maxThroughShort	\\
		\hline													
		1	&	1.777	&	0.172	&	1.896	& 	0.0973	&	2.091	&	0.0825	\\
		\hline													
		2	&	1.422	&	0.334	&	1.429	&	0.290	&	1.658	&	0.271	\\
		\hline													
		3	&	1.079	&	0.495	&	1.087	&	0.480	&	1.291	&	0.442	\\
		\hline													
		4	&	1.014	&	0.656	&	1.014	&	0.647	&	1.029	&	0.640	\\
		\hline													
		5	&	1.010	&	0.816	&	1.024	&	0.815	&	1.039	&	0.810	\\
		\hline
		6	&	1.005	&	0.818	&	1.005	&	0.814	&	1.035	&	0.761	\\
		\hline													
		7	&	1.007	&	0.827	&	1.007	&	0.824	&	1.012	&	0.818	\\
		\hline													
		8	&	1.005	&	0.977	&	1.005	&	0.976	&	1.005	&	0.976	\\
		\hline													
		9	&	1.000	&	0.979	&	1.000	&	0.979	&	1.004	&	0.978	\\
		\hline											
		10	&	1.009	&	0.988	&	1.009	&	0.986	&	1.015	&	0.985	\\
		\hline															

\end{tabular}
}
\end{center}
\caption{Best \minUserCong and \maxThrough for our periodic heuristic and online heuristics.}
\label{table.results}
\end{table}

The first noticeable result is that \persched almost always outperfoms (when
it does not, matches) both the
\minUserCong and \maxThrough attainable by the online scheduling algorithms! 
This is particularly impressive as these objectives are not obtained by the same
online algorithms (hence conjointly), contrarily to the \persched result.

While the gain is minimal (from 0 to 3\%, except \maxThrough increased by 7\% for case 4) when little congestion occurs
(cases 4 to 10), the gain is between 9\% and  16\% for \minUserCong
and between 7\% and 18\% for \maxThrough when congestion occurs (cases 1, 2, 3)!

The value of $\varepsilon$ has been chosen so that the computation stays short. It seems to be a good
compromise as the results are good and the execution times vary from 4 ms (case 10) to 1.8s (case 5)
using a Intel Core I7-6700Q. Note that the algorithm is easily parallelizable, as each iteration of the loop
is independent. Thus it may be worth considering a smaller value of $\varepsilon$, but there will
be no big improvement on the results.

\paragraph*{Model validation through experimental evaluation}

We used the modified IOR benchmark to reproduce the behavior of applications
running on HPC systems and analyze the benefits of I/O schedulers. We made
experiments on the 640 cores of the Jupiter system. Additionally to the results
from both periodic and online heuristics, we present the performance of the
system with no additional I/O scheduler.

\begin{figure}[tbh]
  \centering
	\subfloat[\maxThrough / Upper bound \maxThrough]{\includegraphics[width=0.5\columnwidth]{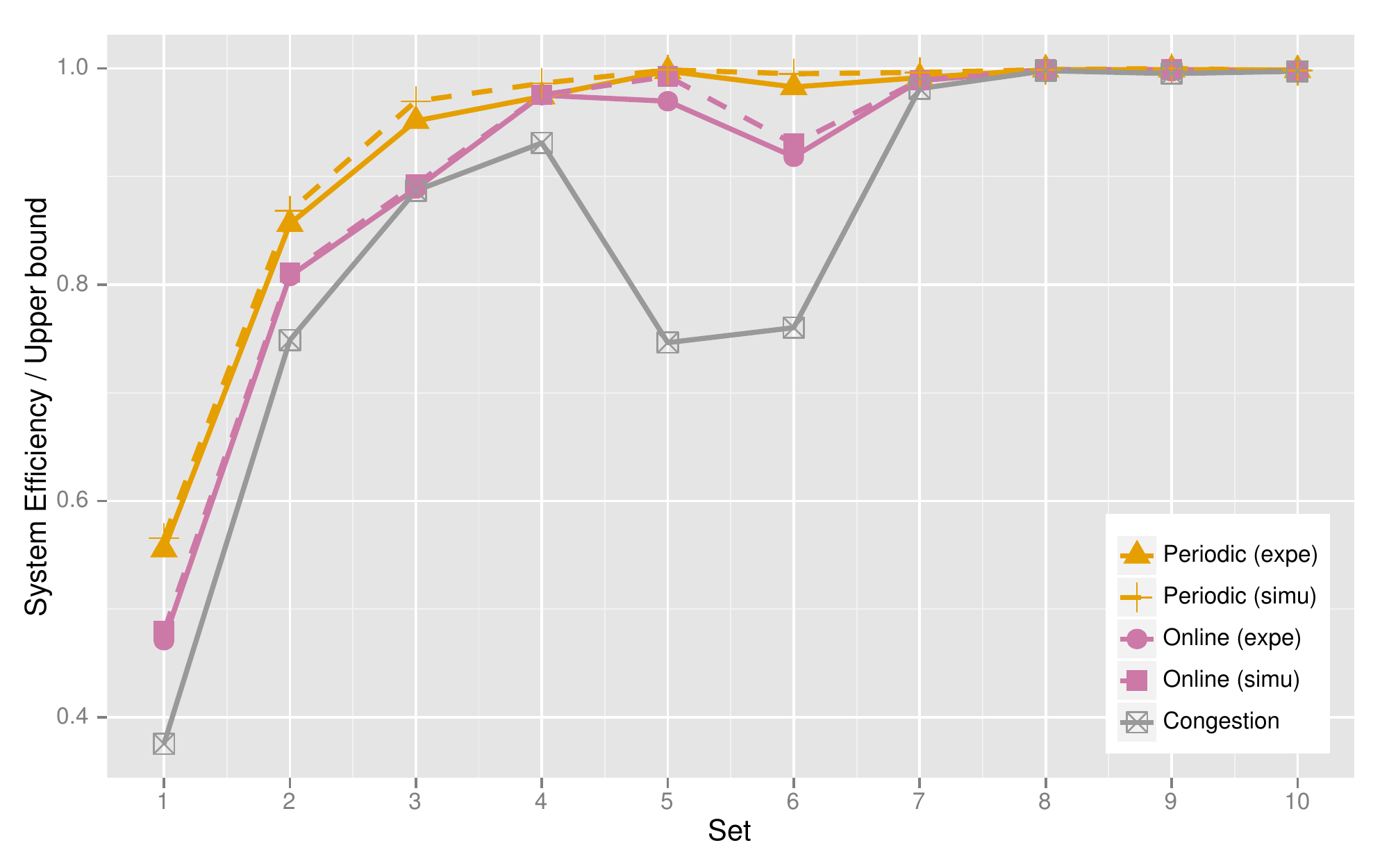}}
	\subfloat[\minUserCong]{\includegraphics[width=0.5\columnwidth]{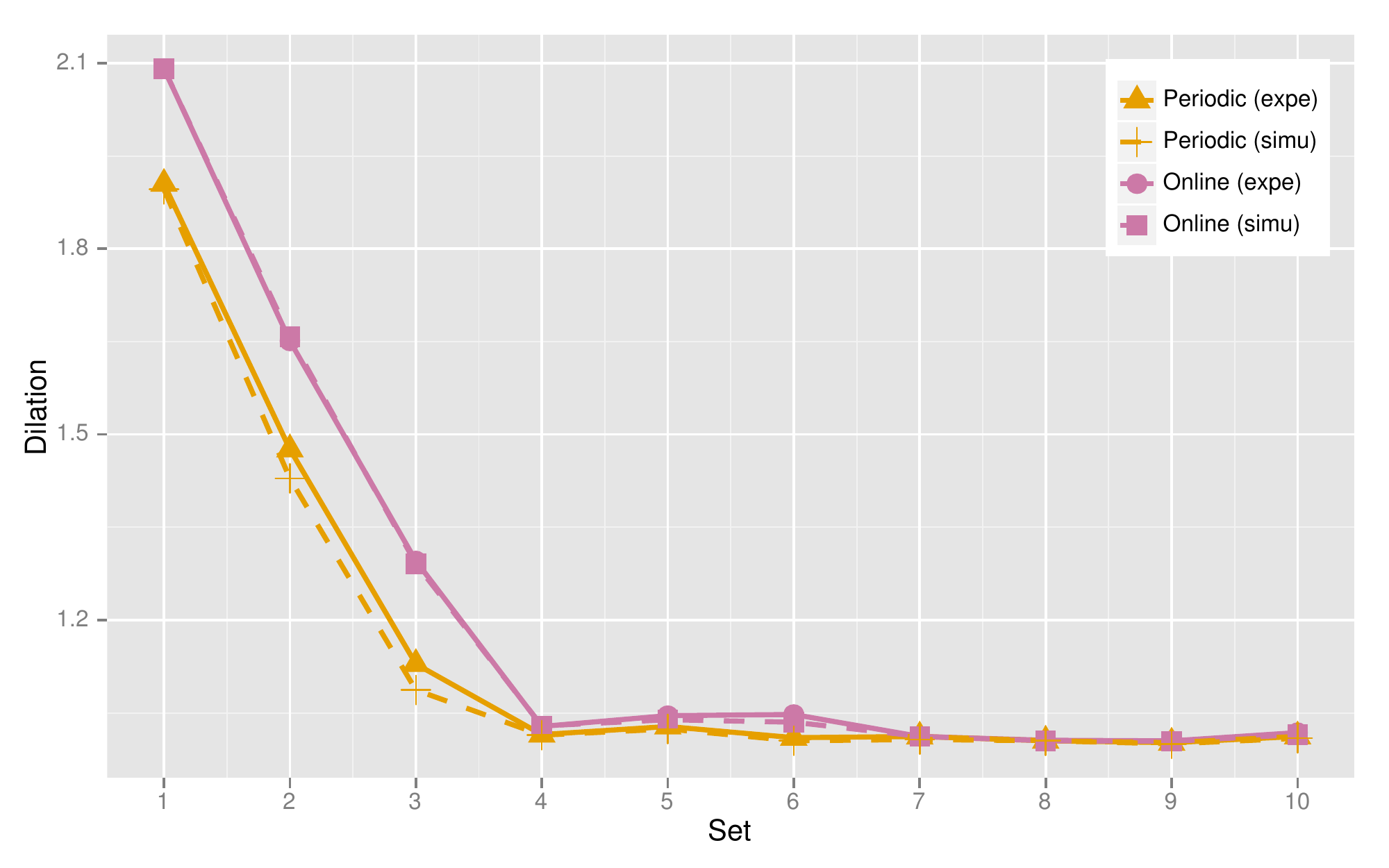}}
\caption{Performance for both experimental evaluation and theoretical
(simulated) results. The performance estimated by our model is accurate within
3.8\% for periodic schedules and 2.3\% for online schedules.}
 \label{fig.vsonline}
\end{figure}

Figure~\ref{fig.vsonline} shows the \maxThrough (normalized using the upper
bound in Table~\ref{table.results}) and \minUserCong when using the periodic
scheduler in comparison with the online scheduler. For the system efficiency the
upper limit and the results when applications are running without any scheduler
are also shown. As observed in the previous section, the periodic scheduler
gives better or similar results to the best solutions that can be returned by
the online ones, in some cases increasing the system performance by 18\% and
the dilation by 13\%. When we compare to the current strategy on Jupiter, the
\maxThrough reach 48\%! In addition, the periodic scheduler has the benefit of
not requiring a global view of the execution of the applications at every moment
of time (by opposition to the online scheduler).

Finally, a key information from those results is the precision of our
model introduced in Section~\ref{sec.model}. The theoretical results (based on
the model) are within 3\% of the experimental results!

{\em This observation is key in launching more thorough evaluation via extensive
simulations and is critical in the experimentation of novel periodic scheduling
strategies.}

	\subsection{Discussion on finding the best pattern size}
	\label{sec.simu.tmax}
	
The core of our algorithm is a search of the best pattern size via an exponential
growth of the pattern size until \maxperiod. As stated in Section~\ref{sec.algo}, the
intuition of the exponential growth is that the larger the pattern size, the less
needed the precision for the pattern size as it might be easier to fit many instances
of each application. On the contrary, we expect that for small pattern sizes finding
the right one might be a precision job.

\begin{figure}[tbh]
\centering
	\includegraphics[width=\columnwidth]{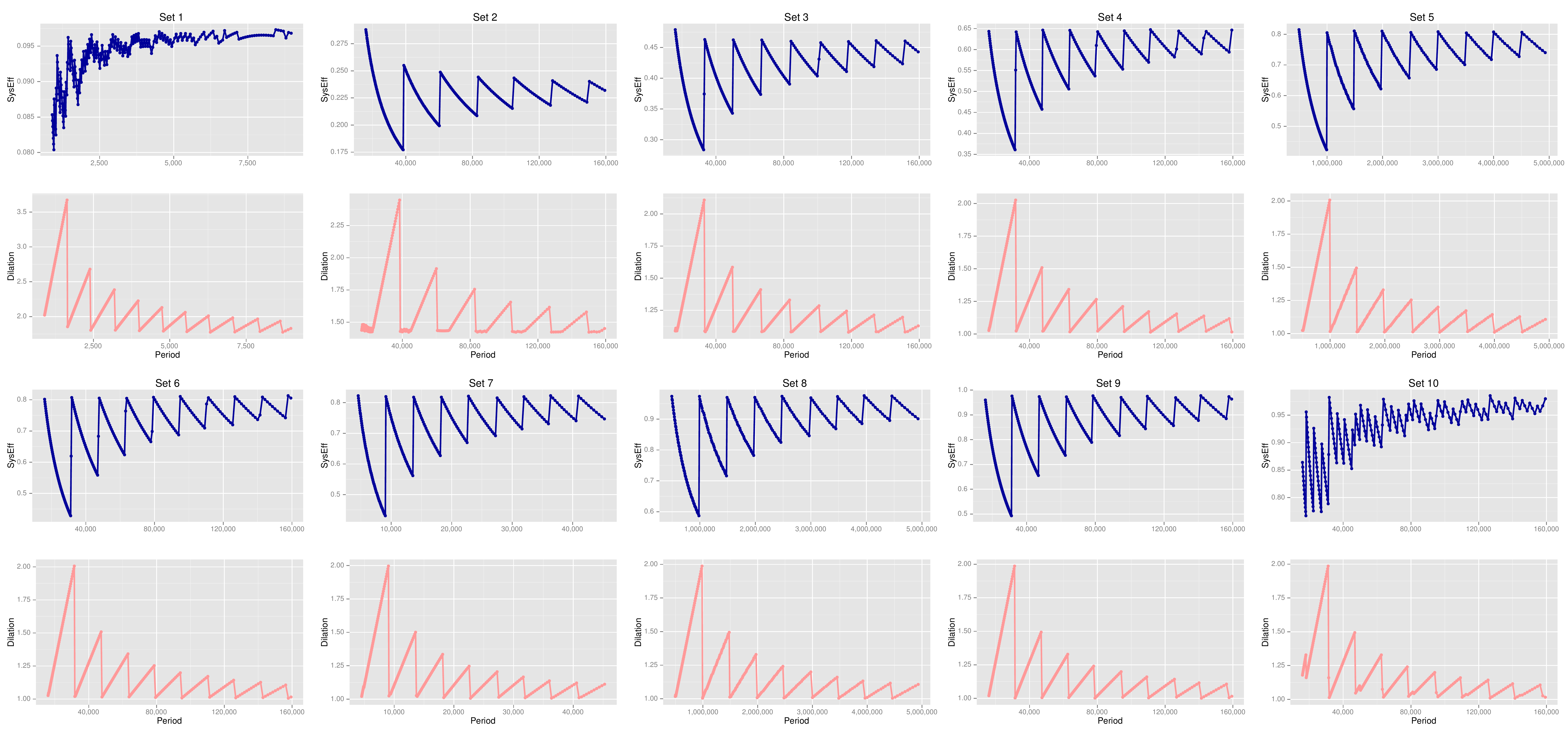}
\caption{Evolution of \maxThrough and \minUserCong when the pattern size increases.}
\label{fig.evolutionT}
\end{figure}
We verify this experimentally and plot on Figure~\ref{fig.evolutionT} the
\maxThrough and \minUserCong found by our algorithm as a function of the pattern size
\period.

Finally, the last information to determine to tweak \persched is the value of
\maxperiod. Remember that we denote $K' = \maxperiod / \minperiod$.

\begin{table}
\begin{center}
\begin{tabular}{c|c|c}
Set & \ninst & \nmax \\
\hline
1 & 11 & 1.00\\
2 & 25 & 35.2\\
3 & 33 & 35.2\\
4 & 247  & 35.2\\
5 & 1086 & 1110\\
\end{tabular}
\hspace{0.5cm}
\begin{tabular}{c|c|c}
Set & \ninst & \nmax \\
\hline
6 & 353 & 35.2\\
7 & 81 & 10.2\\
8 & 251 & 31.5\\
9 & 9 & 1.00\\
10 & 28 & 3.47\\
\end{tabular}
\end{center}
\caption{Maximum number of instances (\ninst) per application, ratio between longest and shortest application (\nmax) in the solution returned by \persched.}
\label{table.instances}
\end{table}

To be able to get an estimate of the pattern size returned by \persched, we
provide in Table~\ref{table.instances} (i) the maximum number of instances
\ninst of any application, and (ii) the ratio 
$\nmax =\frac{\max_k \left (\wk+\tio \right)}{\min_k \left (\wk+\tio\right)}$.
Together along with the fact that the \minUserCong (Table~\ref{table.results})
is always below 2 they give a rough idea of $K'$ ($\approx
\frac{\ninst}{\nmax}$). It is sometimes close to 1, meaning that a small value of $K'$ can be sufficient, but choosing $K' \approx 10$ is necessary in the general case.

We then want to verify the cost of under-estimating \maxperiod. For this
evaluation all runs were done up to $K'=100$ with $\varepsilon=0.01$.
Denote $\maxThrough(K')$ (resp. $\minUserCong(K')$) the maximum \maxThrough
(resp. corresponding \minUserCong) obtained when running \persched with $K'$. We plot
their normalized version that is:
\[
\frac{\maxThrough(K')}{\maxThrough(100)} \left (\text{resp. } \frac{\minUserCong(K')}{\minUserCong(100)} \right )
\]
on Figure~\ref{fig.tmax}.
\begin{figure}[tbh]
\centering
\includegraphics[width=0.8\columnwidth]{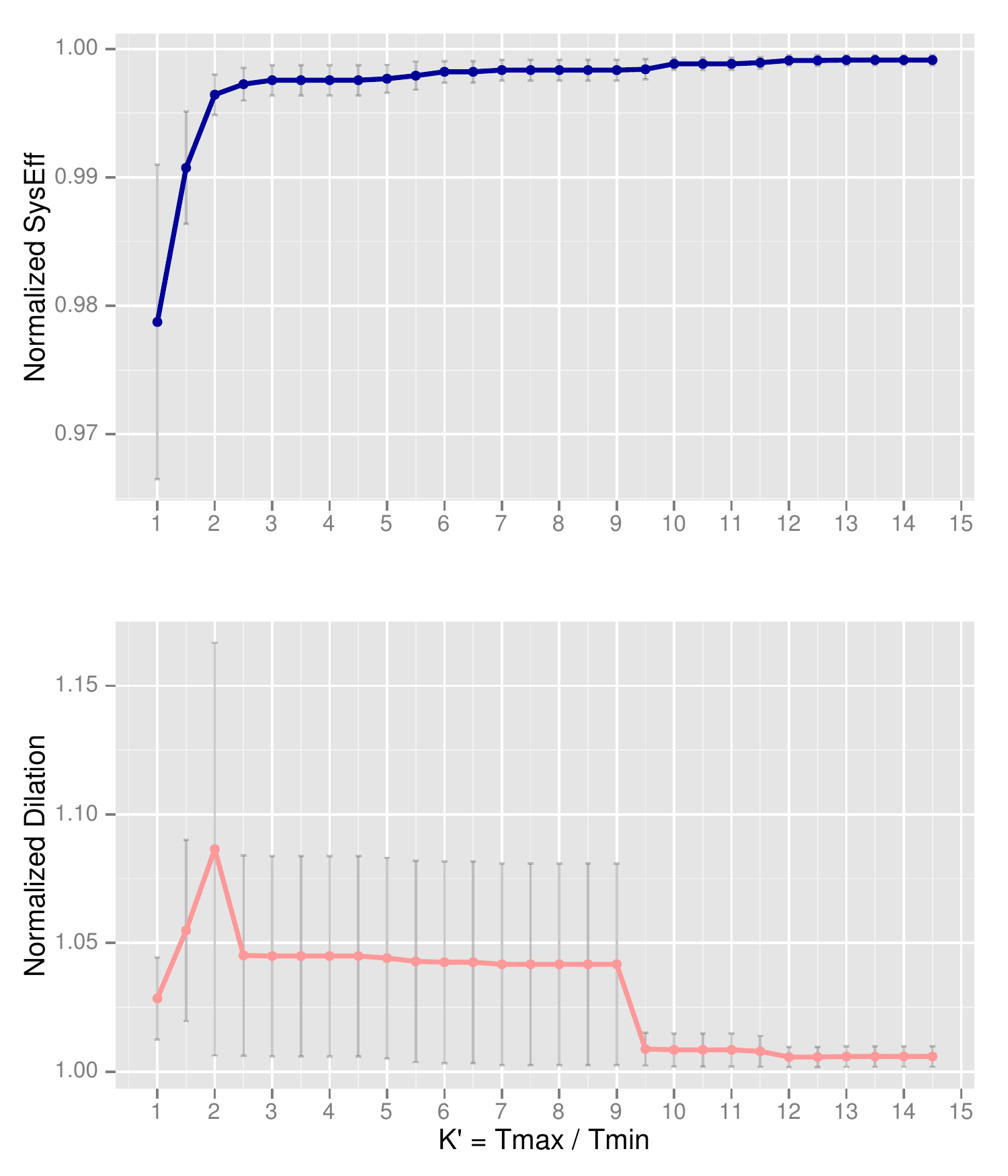}
\caption{Normalized system efficiency and dilation obtained by
Algorithm~\ref{algo:persched} averaged on all 10 sets as a function of $K'$ (with Standard Error bars).}
\label{fig.tmax}
\end{figure}
The main noticeable information is that the convergence is very fast: when
$K'=3$, the average \maxThrough is within $0.3\%$ of $\maxThrough(100)$, but the
corresponding average \minUserCong is $5\%$ higher than $\minUserCong(100)$. If
we go to $K' = 10$ then we have a \maxThrough of $0.1\%$ of $\maxThrough(100)$
and a \minUserCong within $1\%$ of $\minUserCong(100)$!
Hence validating that choosing $K'=10$ is sufficient.

	\section{Related Work}
	\label{sec.related}

Performance variability due to resource sharing can significantly detract from the suitability of a given architecture for a workload as well as from the overall performance realized by parallel workloads~\cite{hpc3}. Over the last decade there have been studies to analyze the sources of performance degradation and several solutions have been proposed. 
In this section, we first detail some of the existing work that copes with I/O congestion and then we present some of the theoretical literature that is similar to our \periodic problem.

The storage I/O stack of current HPC systems has been increasingly identified as a performance bottleneck. Significant improvements in both hardware and software need to be addressed to overcome oncoming scalability challenges. The study in~\cite{newio1} argues for making data staging coordination driven by generic cross-layer mechanisms that enable global optimizations by enforcing local decisions at node granularity at individual stack layers.

While many other studies suggest that I/O congestion is one of the main problems for future scale 
platforms~\cite{biswas2007petascale,lofstead2013insights}, few papers focus on finding a solution 
at the platform level. Some paper consider application-side I/O 
management and transformation (using aggregate nodes, compression
etc)~\cite{zhang2012opportunistic,lofstead2010managing,tessier2016topology}. We
consider those work to be orthogonal to our work and able to work jointly.
Recently, numerous works focus on using machine learning for auto tuning and
performance studies~\cite{behzad2013taming,kumar2013characterization}. However
these solution also work at the application level for IO-scheduling and do not
have a global view of the I/O requirements of the system and they need to be
supported by a platform level I/O management for better results.

Some paper consider the use of burst buffers to reduce I/O congestion by
delaying accesses to the file storage, as they found that congestion occurs on a
short period of time and the bandwidth to the storage system is often
underutilized~\cite{Liu12onthe}. However, the computation power tends
to increase faster than the I/O bandwidth, which may cause the bandwidth to be saturated more often and thus decreasing the efficiency of burst buffers. \cite{kougkas2016leveraging}~presents a dynamic
I/O scheduling at the application level using burst buffers to stage I/O and to allow computations to continue uninterrupted. They design different strategies to mitigate I/O interference, including
partitioning the PFS, which reduces the effective bandwidth non-linearly. However, the strategies are basically designed for only 2 applications and their heuristics does not take into account
the characteristics of the applications to better optimize the scheduling.

The study from~\cite{vios} offers ways of isolating the performance experienced by applications of one operating system from variations in the I/O request stream characteristics of applications of other operating systems. While their solution cannot be applied to HPC systems, the study offers a way of controlling the coarse grain allocation of disk time to the different operating system instances as well as determining the fine-grain interleaving of requests from the corresponding operating systems to the storage system.

Closer to this work, online schedulers for HPC systems were developed such as
our previous work~\cite{gainaru2015scheduling}, the study by Zhou et al~\cite{newio2}, and a solution proposed by Dorier
et al~\cite{matthieu}. In~\cite{matthieu}, the authors investigate the
interference of two applications and analyze the benefits of interrupting or
delaying either one in order to avoid congestion. Unfortunately their approach
cannot be used for more than two applications. Another main difference with our previous work is the light-weight approach of this study where the computation is only done once.

Our previous study~\cite{gainaru2015scheduling} is more
general by offering a range of options to schedule each I/O performed by an
application.
Similarly, the work from~\cite{newio2} also utilizes a global job scheduler
to mitigate I/O congestion by monitoring and controlling
jobs’ I/O operations on the fly. Unlike online solutions, this paper focuses on a decentralized approach where the scheduler is
integrated into the job scheduler and computes ahead of time, thus overcoming the need to monitor the I/O traffic of each application at every moment of time.


As a scheduling problem, our problem is somewhat close to the cyclic scheduling
problem (we refer to Hanen and Munier~\cite{hanen1993cyclic} for a survey),
namely there are given a set of activities with time dependency between
consecutive tasks stored in a DAG that should be executed on $p$ processors. 
The main difference is that in cyclic scheduling there is no consideration of a
constant time between the end of the previous instance and the next instance.

	\section{Conclusion}
	\label{sec.conclusion}

Performance variation due to resource sharing in HPC systems is a reality and I/O congestion is currently one of the main causes of degradation. Current storage systems are unable to keep up with the amount of data handled by all applications running on an HPC system, either during their computation or when taking checkpoints. 
In this document we have presented a novel I/O scheduling technique that offers a decentralized solution for minimizing the congestion due to application interference. Our method takes advantage of the periodic nature of HPC applications by allowing the job scheduler to pre-define each application's I/O behavior for their entire execution. Recent studies~\cite{dorier2014omnisc} have shown that HPC applications have predictable I/O patterns even when they are not completely periodic, thus we believe our solution is general enough to easily include the large majority of HPC applications.

We conducted simulations for different scenarios and made experiments to validate our results. Decentralized solutions are able to improve both total system efficiency and application dilation compared to dynamic state-of-the-art schedulers. Moreover, they do not require a constant daemon capable of monitoring the state of all applications, nor do they require a change in the current I/O stack. One particularly interesting result is for scenario~1 with 10 identical periodic
behaviors (such as what can be observed with periodic checkpointing for
fault-tolerance). In this case the periodic scheduler shows a 30\% improvement
in \maxThrough. Thus, system wide applications taking global checkpoints could
benefit from such a strategy. 

{\em Future work:} we believe this work is the initialization of a new set of
techniques to deal with the I/O requirements of HPC system. In particular,
by showing the efficiency of the periodic technique on simple pattern, we expect
to open a door to multiple extensions. We give here some examples that we will
consider in the future.
The next natural directions is to take more complicated periodic shapes for
applications (an instance could be composed of sub-instances) as well as
different point of entry inside the job scheduler (multiple IO nodes). 
This would be modifying the \iis procedure and we expect that this should work
well as well. 
Another future step would be to study how variability in the compute or I/O
volumes impact a periodic schedule or the impact of non periodic applications.
Finally we plan to model burst buffers and to show how to use them conjointly
with periodic schedules.

Our method is used for minimizing the congestion caused by concurrent I/O
accesses. However, the methodology and concepts are general and can be applied
to any resource sharing problem. We will continue to investigate the causes for
performance degradation in HPC applications and adapt our findings to each case.

\section*{Acknowledgement}
Part of this work was done when Guillaume Aupy and Valentin Le Fèvre were at
Vanderbilt University. The authors would like to thank Anne Benoit and Yves
Robert for helpful discussions.

\bibliographystyle{abbrv}
\bibliography{biblio}

\end{document}